\documentclass[a4paper,11pt]{article}
\usepackage{amsmath,amssymb,amsfonts,amsthm,stmaryrd}
\usepackage{color}
\usepackage{geometry}
\usepackage{tikz}
\usepackage{graphicx}
\usepackage{url}
\usepackage{hyperref}
\hypersetup{
    colorlinks=true,
    citecolor=blue,
    linkcolor=red,
    filecolor=magenta,
    urlcolor=cyan,
}
\let\set\mathbb

\newtheorem{theorem}{Theorem}[section]

\newtheorem{lemma}[theorem]{Lemma}
\newtheorem{remark}[theorem]{Remark}
\newtheorem{definition}[theorem]{Definition}
\newtheorem{example}[theorem]{Example}
\newtheorem{proposition}[theorem]{Proposition}
\newtheorem{fact}[theorem]{Fact}

\def\redt{\operatorname{red}}
\def\im{\operatorname{im}}
\def\spanning{\operatorname{span}}

\newcommand{\qbinom}{\genfrac{[}{]}{0pt}{}}

\title{Order Bounds for Hypergeometric and $q$-Hypergeometric Creative Telescoping}
\author{Hui Huang\\
	{\small School of Mathematics and Statistics, Fuzhou University,}\\
	{\small Fuzhou 350108, China}\\
	{\tt \small huanghui@fzu.edu.cn}
	}
\date{}

\begin{document}
\maketitle

\begin{abstract}
Leveraging a general framework adapted from symbolic integration, a unified reduction-based 
algorithm for computing telescopers of minimal order for hypergeometric and $q$-hypergeometric terms 
has been recently developed. In this paper, we conduct a deeper exploration and put forth a new 
argument for the termination of the algorithm. 
This not only provides an independent proof of existence of telescopers, but also 
allows us to derive unified upper and lower 
bounds on the order of telescopers for hypergeometric terms and their $q$-analogues. 
Compared with known bounds in the literature, our bounds,
in the hypergeometric case, are exactly the same as the 
tight ones obtained in 2016; while in the $q$-hypergeometric case, 
no lower bounds were known before, and our upper bound is 
sometimes better and never worse than the known one.
\end{abstract}

\section{Introduction}\label{SEC:intro}
This paper concerns about hypergeometric terms and their $q$-analogues.
These are ubiquitous objects appearing in combinatorics. A sequence, say 
$f_{n,k}$ in  two discrete variables $n,k$, is called a {\em hypergeometric term} 
if the two shift quotients $f_{n+1,k}/f_{n,k}$ and $f_{n,k+1}/f_{n,k}$ are 
rational functions in $n$ and $k$, and it is called a {\em $q$-hypergeometric term} 
if the two shift quotients are rational functions in $q^n$ and~$q^k$.
Prototypical examples are respectively given by the binomial coefficient $f_{n,k} = {n\choose k}$
and the Gaussian binomial coefficient
\[
f_{n,k} = \qbinom{n}{k}_q=\begin{cases}
\frac{(q;q)_n}{(q;q)_k(q;q)_{n-k}} & \text{if}\ 0\leq k\leq n,\\[1ex]
0 & \text{otherwise},
\end{cases}
\]
where $(q;q)_k=\prod_{i=1}^k(1-q^i)$ is the $q$-Pochhammer symbol with $(q;q)_0=1$. 

The method of creative telescoping, pioneered by Zeilberger~\cite{Zeil1990a,Zeil1990b,Zeil1991},
currently serves as the primary technique for 
simplifying definite sums of hypergeometric terms and their $q$-analogues. 
Specializing to the hypergeometric case, the method takes a hypergeometric 
summand $f_{n,k}$ and finds a recurrence operator $L$ independent of $k$ 
and the $k$-shift operator $S_k$, and another hypergeometric term $g_{n,k}$ such that 
$L(f_{n,k}) = g_{n,k+1}-g_{n,k}$. More precisely, the operator $L$ has the form 
$L = c_0+c_1S_n+\cdots+c_\rho S_n^\rho$ for some $c_0,\dots,c_\rho$ only 
depending on~$n$,
where $S_n$ is the $n$-shift operator.
We call such an $L$ a {\em telescoper} for $f_{n,k}$ and $g_{n,k}$ 
a {\em certificate} for~$L$. Telescopers and certificates allow us to produce recurrence 
equations solved by definite sums of interest such as $\sum_{k=0}^nf_{n,k}$, yielding
useful information about the sums; see \cite{PWZ1996} for further details.

During the past 35 years, the method of creative telescoping has been 
generalized and refined in various ways.
The latest trend in the development of creative telescoping is the so-called 
{\em reduction-based approach} originated from \cite{BCCL2010}. 
One important feature of this approach is that it is 
flexible to construct a telescoper for a given term with or without construction of a certificate.
This is particularly useful in a typical situation where the certificate is not needed and it
is much larger (and thus computationally more expensive) than the telescoper.
An excellent exposition of this approach can be found in~\cite{Chen2019}.

As indicated by the name, reductions play a fundamental role in the reduction-based approach.
Let $\set K$ be a field of characteristic zero. By \cite[Definition~5.67]{Kaue2023}, 
a {\em reduction} $\redt_k$, formulated for the shift case, is a $\set K(n)$-linear map from a
domain under consideration, say~$\set D$, containing $\set K(n,k)$, to itself with the property that 
for all $f\in \set D$ there exists a $g\in\set D$ such that $f-\redt_k(f)=S_k(g)-g$. In other words, 
the difference $f-\redt_k(f)$ is a summable term. We call $\redt_k(f)$ a {\em remainder} of $f$ 
with respect to the reduction $\redt_k$. The basic idea of the reduction-based approach
is then as follows. For a given $f\in\set D$, we continually compute 
$\redt_k(f),\redt_k(S_n(f)),\redt_k(S_n^2(f)),\dots$ 
until we find a nontrivial linear dependency over~$\set K(n)$. Once we have such a dependency,
say $c_0\redt_k(f)+\cdots+c_\rho\redt_k(S_n^\rho(f))=0$ for some $c_0,\dots,c_\rho\in \set K(n)$,
not all zero, then the operator $c_0+\cdots+c_\rho S_n^\rho$ is a telescoper for~$f$.

There are usually two common ways to guarantee the termination of the above process.
One is to show that the $\set K(n)$-vector space spanned by 
$\redt_k(f),\redt_k(S_n(f)),\redt_k(S_n^2(f)),\dots$ 
has a finite dimension, or equivalently, the reduction $\redt_k$ is {\em confined} 
(cf.\ \cite[Definition~5.67]{Kaue2023}). Then, as soon as $\rho$ exceeds this dimension, 
we can be sure that a nontrivial linear dependency (and thus a telescoper) will be found. 
The advantage 
of this way is that it not only provides an independent proof of existence of telescopers but 
also yields bounds on their orders. Order bounds for telescopers will be useful in analyzing
the complexity and improving the efficiency of creative telescoping algorithms.
The other way to ensure the termination is to 
show that for every summable term $f$ we have $\redt_k(f) = 0$, or equivalently, the reduction 
$\redt_k$ is {\em complete} (cf.\ \cite[Definition~5.67]{Kaue2023}). This makes sure that we do
not miss any telescoper and the smallest possible one will be found provided that telescopers
are known to exist. This way requires a priori knowledge of existence of telescopers and does
not provide any information on the order. We remark that existence problems of telescopers 
have already been solved for hypergeometric terms~\cite{Abra2003}, for their $q$-analogues
\cite{CHM2005}, for trivariate rational functions~\cite{CHLW2016,CDZ2019,CDWZ2021},
and more recently for P-recursive sequences~\cite{Du2025}.

Recently, Chen et al.\ \cite{CDGHL2025} presented a unified reduction, and 
a subsequent creative telescoping algorithm, for both hypergeometric and 
$q$-hypergeometric terms using a 
general framework adapted from symbolic integration. They showed that this reduction 
is complete. The goal of the present paper is to continue their theory and further prove 
that the reduction is also confined. We will reformulate the theory in certain way so as 
to obtain a finite-dimensional $\set K(n)$-vector space produced by remainders. This
enables us to derive unified upper and lower bounds on the order of telescopers 
for hypergeometric and $q$-hypergeometric terms, 
providing a second, alternate way, in addition to the existence criterion, 
to ensure the termination of creative telescoping algorithms.
Compared with known bounds in the literature, our bounds,
in the hypergeometric case, are exactly the same as those tight ones
provided in~\cite{Huan2016}. In the $q$-hypergeometric case, 
no lower bounds were known before, and our upper bound is 
at least as good as the known one in~\cite{MoZe2005} especially 
for ``generic input" (as the latter is already generically sharp); however, 
there are examples in which our bound is better than the known bound.

The remainder of the paper proceeds as follows. The unified theory developed in~\cite{CDGHL2025}
will be recalled in the next section. In Section~\ref{SEC:rem}, we present an inherent 
relation between remainders, which enables us to obtain in Section~\ref{SEC:ord} unified 
upper and lower bounds on the order of telescopers for hypergeometric and $q$-hypergeometric terms.
The paper ends with a comparison between our bounds with the known ones in the 
literature.

\section{A unified theory}\label{SEC:prelim}
Throughout the paper, we let $\set K$ be a field of characteristic zero with $\set F = \set K(x)$
and $\set F(y)=\set K(x,y)$ being the field of rational functions in $x$ and $y$ over $\set K$. 
Let $\sigma_x$ and $\sigma_y$ be both either the usual shift operators with respect to $x$ 
and $y$ respectively defined by
\[
\sigma_x(f(x,y)) = f(x+1,y)\quad\text{and}\quad
\sigma_y(f(x,y)) = f(x,y+1),
\]
or the $q$-shift operators with respect to $x$ and $y$ respectively
defined by
\[
\sigma_x(f(x,y)) = f(qx,y)\quad\text{and}\quad
\sigma_y(f(x,y)) = f(x,qy)
\]
for any $f\in \set K(x,y)$, where $q\in \set K$ is neither zero nor a root of unity.  
We will refer to the former case as the usual shift case, and the latter one as the $q$-shift case. 
The pair $(\set K(x,y),\{\sigma_x,\sigma_y\})$ forms a partial ($q$-)difference field. 
Let $R$ be a {\em partial ($q$-)difference ring extension} of $(\set K(x,y),\{\sigma_x,\sigma_y\})$, 
that is, $R$ is a ring containing $\set K(x,y)$ together with two distinguished endomorphisms 
$\sigma_x$ and $\sigma_y$ from $R$ to itself, whose respective restrictions to $\set K(x,y)$ 
agree with the two automorphisms defined earlier. An element $c\in R$ is called a {\em constant} 
if it is invariant under the applications of $\sigma_x$ and~$\sigma_y$.  It is readily seen that 
all constants in $R$ form a subring of $R$.
\begin{definition}
An invertible element $T$ of~$R$ is called a {\em $\sigma_y$-hypergeometric term}
if there exists $f\in \set F(y)$ such that $\sigma_y(T) = fT$.  We call $f$ the 
$\sigma_y$-quotient of~$T$. An invertible element $T$ of~$R$ is called 
a {\em $(\sigma_x,\sigma_y)$-hypergeometric term} if it is both $\sigma_x$- 
and $\sigma_y$-hypergeometric. 
\end{definition}
For a given $\sigma_y$-hypergeometric term $T$, an important question is 
to decide whether there exists another $\sigma_y$-hypergeometric term $G$ 
such that $T = \Delta_y(G)$, 
where $\Delta_y$ denotes the difference of~$\sigma_y$ and the identity map of~$R$.
If such a $G$ exists, we say that $T$ is {\em $\sigma_y$-summable}. We refer to this 
question as the {\em $\sigma_y$-summability problem}. A key idea on solving the 
$\sigma_y$-summability problem of a $\sigma_y$-hypergeometric term 
$T$ is to write it into a multiplicative decomposition $T = fH$, where $f\in \set F(y)$ 
and $H$ is a $\sigma_y$-hypergeometric term enjoying certain nice properties 
(cf.\ \cite{AbPe2002b,CHM2005}), and then reduce the problem to find a rational function 
$g\in \set F(y)$ such that
\[
f H= \Delta_y(gH), \ \text{or equivalently,}\ f = \Delta_K(g),
\]
where $K = \sigma_y(H)/H$ and $\Delta_K=K\sigma_y-1$ is the linear map from $\set F(y)$ 
to itself mapping $f\in\set F(y)$ to $K\sigma_y(f)-f$. This brings our attentions from the
$\sigma_y$-hypergeometric terms to rational functions, and motivates us to 
recall \cite[\S 2.1]{CDGHL2025} the notion of normal and special polynomials.
\begin{definition}\label{DEF:types}
A polynomial $p\in \set F[y]$ is said to be {\em $\sigma_y$-normal} if $\gcd(p,\sigma_y^\ell(p)) = 1$ 
for any nonzero integer~$\ell$, and {\em $\sigma_y$-special} if $p\mid \sigma_y^\ell(p)$ for some 
nonzero integer $\ell$.
\end{definition}
Note that $\sigma_y$-normal polynomials coincide with $\sigma_y$-free polynomials in the literature.
A polynomial is not necessarily $\sigma_y$-normal or $\sigma_y$-special, but an irreducible 
polynomial in~$\set F[y]$ must be either $\sigma_y$-normal or $\sigma_y$-special. All elements
in~$\set F$ are $\sigma_y$-special, and a polynomial in~$\set F[y]$ is both $\sigma_y$-normal
and $\sigma_y$-special if and only if it is in $\set F\setminus\{0\}$. 

Recall that two polynomials $a,b\in \set F[y]$ are {\em $\sigma_y$-coprime} if 
$\gcd(a,\sigma_y^\ell(b)) = 1$ for any nonzero integer~$\ell$. Note that two 
$\sigma_y$-coprime polynomials need not to be coprime. The following describes
nice multiplicative properties satisfied by $\sigma_y$-normal and $\sigma_y$-special
polynomials.
\begin{proposition}[{\cite[Proposition~2.4]{CDGHL2025}}]
\leavevmode\null
\label{PROP:props}
\begin{itemize}
\item[(i)] Any finite product of $\sigma_y$-normal and pairwise $\sigma_y$-coprime 
polynomials in~$\set F[y]$ is $\sigma_y$-normal. Any factor of a $\sigma_y$-normal polynomial 
in $\set F[y]$ is $\sigma_y$-normal.
\item[(ii)] Any finite product of $\sigma_y$-special polynomials in $\set F[y]$ is $\sigma_y$-special.
Any factor of a nonzero $\sigma_y$-special polynomial in $\set F[y]$ is $\sigma_y$-special.
\end{itemize}
\end{proposition}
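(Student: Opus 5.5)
The whole proof rests on one observation: $\sigma_y$ is a ring automorphism of $\set F[y]$ that fixes $\set F$ and preserves degree in~$y$. Hence $\sigma_y^\ell$ commutes with taking gcds (up to units in $\set F$) and maps irreducible factors to irreducible factors. I will prove the four claims separately, factor claims first, since the product claims use them.

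\emph{Factors, part (i).} Let $p$ be $\sigma_y$-normal and $a\mid p$. For $\ell\neq 0$ we have $\sigma_y^\ell(a)\mid\sigma_y^\ell(p)$, so $\gcd(a,\sigma_y^\ell(a))$ divides $\gcd(p,\sigma_y^\ell(p))=1$. Thus $a$ is $\sigma_y$-normal.

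\emph{Products, part (i).} Let $p=p_1\cdots p_m$, where each $p_i$ is $\sigma_y$-normal and the $p_i$ are pairwise $\sigma_y$-coprime. Suppose some $\ell\neq0$ and some irreducible $r$ satisfy $r\mid p$ and $r\mid\sigma_y^\ell(p)$. Then $r\mid p_i$ for some $i$, and $r\mid\sigma_y^\ell(p_j)$ for some $j$, since $\sigma_y^\ell(p)=\prod\sigma_y^\ell(p_j)$. If $i=j$, this contradicts the normality of $p_i$. If $i\neq j$, it contradicts $\gcd(p_i,\sigma_y^\ell(p_j))=1$. Hence $\gcd(p,\sigma_y^\ell(p))=1$.

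\emph{Products, part (ii).} Suppose $p_i\mid\sigma_y^{\ell_i}(p_i)$ with $\ell_i\neq 0$. Iterating, $\sigma_y^{\ell_i}(p_i)\mid\sigma_y^{2\ell_i}(p_i)$, and so on, which gives $p_i\mid\sigma_y^{k\ell_i}(p_i)$ for every $k\ge1$. Let $L=\prod_i|\ell_i|$, which is a common multiple of all the $\ell_i$. If some $\ell_i<0$, replace $k\ell_i$ by the positive multiple $L$ using the following fact: for nonzero $p$, the relation $p\mid\sigma_y^\ell(p)$ forces $\sigma_y^\ell(p)=cp$ with $c\in\set F^*$, because the two sides have equal degree. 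This relation is symmetric, so $p\mid\sigma_y^{-\ell}(p)$ as well. Therefore $p_i\mid\sigma_y^{L}(p_i)$ for all $i$, and so $\prod p_i\mid\sigma_y^L(\prod p_i)$. Zero factors are trivial.

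\emph{Factors, part (ii).} This is the main obstacle, because divisibility does not obviously pass to factors. Let $p\neq0$ be special, so $\sigma_y^\ell(p)=cp$ with $c\in\set F^*$ by the degree argument. Then $\sigma_y^\ell$ permutes the irreducible factors of $p$ up to units, preserving multiplicities. Since there are finitely many such factors, some power $\sigma_y^{N\ell}$ with $N\ge1$ fixes every irreducible factor $r$ up to a unit, together with its multiplicity in $p$. For a factor $a\mid p$, write $a=u\prod r^{e_r}$. Then $\sigma_y^{N\ell}(a)=u'\prod r^{e_r}$ for some unit $u'$, so $a\mid\sigma_y^{N\ell}(a)$. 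Units in $\set F\setminus\{0\}$ are special trivially. This completes the proposal.
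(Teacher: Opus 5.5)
Your proof is correct and complete. This paper does not prove the proposition; it is quoted from \cite{CDGHL2025}, so there is no in-paper argument to measure yours against. Your argument is a sound self-contained proof built on three facts:
\begin{itemize}
\item $\sigma_y$ is a degree-preserving $\set F$-automorphism of $\set F[y]$;
\item for $p\neq 0$, the divisibility $p\mid\sigma_y^\ell(p)$ forces $\sigma_y^\ell(p)=c\,p$ with $c\in\set F\setminus\{0\}$;
\item for a special polynomial, $\sigma_y^\ell$ permutes its finitely many irreducible factors up to units, so some power $\sigma_y^{N\ell}$ fixes each one up to a unit.
\end{itemize}

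A shortcut for part (ii) would be to invoke Lemma~\ref{LEM:special}. It says the special polynomials are exactly the elements of $\set F$ in the usual shift case, and exactly the $c\,y^k$ in the $q$-shift case. Both classes are visibly closed under products and under taking factors of nonzero members. Your route does not need that characterization, which depends on the concrete form of the two operators. So your argument works verbatim for any $\set F$-automorphism of $\set F[y]$, and it cannot be circular even if the proof of that lemma uses the present proposition.

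Two minor simplifications are possible:
\begin{itemize}
\item In the factor part of (ii), tracking multiplicities is unnecessary. You only need some power of $\sigma_y^\ell$ to fix each distinct irreducible factor up to a unit.
\item In the product part of (ii), the iteration step is superfluous. Once you have $\sigma_y^{\ell_i}(p_i)=c_ip_i$, you get $\sigma_y^{k\ell_i}(p_i)=c_i^k p_i$ for every $k\in\set Z$ directly, which also disposes of negative $\ell_i$.
\end{itemize}
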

It is known that a major difference from the $q$-shift case to the usual
one lies in the appearance of nontrivial $\sigma_y$-special polynomials.
\begin{lemma}[{\cite[Lemma~2.7]{CDGHL2025}}]\label{LEM:special}
Let $p$ be a polynomial in $\set F[y]$. 
\begin{itemize}
\item[(i)] In the usual shift case, $p$ is $\sigma_y$-special if and only if $p\in \set F$.
\item[(ii)] In the $q$-shift case, $p$ is $\sigma_y$-special 
if and only if $p/y^k\in \set F$ for some $k\in \set N$.
\end{itemize}
\end{lemma}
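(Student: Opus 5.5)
The plan is to prove both parts together. First, reduce to irreducible factors using Proposition~\ref{PROP:props}(ii). Then analyze an irreducible $\sigma_y$-special polynomial directly, comparing leading and trailing coefficients in $y$.

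\emph{Reduction to irreducibles and the easy direction.} Every element of $\set F$ is $\sigma_y$-special. In the $q$-shift case $y$ is special, since $\sigma_y(y)=qy$ and so $y\mid\sigma_y(y)$. Hence the ``if'' parts follow from the product statement in Proposition~\ref{PROP:props}(ii). For the ``only if'' parts, take a nonzero $\sigma_y$-special $p\notin\set F$ (the zero polynomial is trivially excluded or handled by convention). By Proposition~\ref{PROP:props}(ii), every irreducible factor $r$ of $p$ is $\sigma_y$-special. It therefore suffices to show that an irreducible $\sigma_y$-special $r\in\set F[y]$ of positive degree cannot exist in the usual shift case, and must be an $\set F$-multiple of $y$ in the $q$-shift case.

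\emph{Irreducible case.} Suppose $r\mid\sigma_y^\ell(r)$ with $\ell\neq0$. Since $\sigma_y$ preserves degree in $y$, we get $\sigma_y^\ell(r)=c\,r$ with $c\in\set F\setminus\{0\}$.
\begin{itemize}
\item \emph{Usual shift case.} Write $r=\sum_{i=0}^d a_i y^i$ with $d\ge1$. Then $\sigma_y^\ell(r)=r(y+\ell)$. Comparing the $y^d$ coefficients gives $c=1$. Comparing the $y^{d-1}$ coefficients gives $a_{d-1}+d\ell a_d=a_{d-1}$, so $d\ell a_d=0$. This is impossible because $\ell\neq0$ and the characteristic is zero.
\item \emph{$q$-shift case.} We have $r(q^\ell y)=c\,r(y)$. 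Comparing the coefficients of $y^i$ gives $a_iq^{\ell i}=c\,a_i$ for all $i$. Since $q$ is not a root of unity, the values $q^{\ell i}$ are pairwise distinct for distinct $i$. Hence exactly one $a_i$ is nonzero, so $r=a_ky^k$. Irreducibility then forces $r\sim y$.
\end{itemize}

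\emph{Assembling.} In the usual case, $p$ has no irreducible factor of positive degree, so $p\in\set F$. In the $q$-case, every irreducible factor of $p$ is an associate of $y$, so $p=a y^k$ with $a\in\set F$, i.e.\ $p/y^k\in\set F$.

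The only delicate point is the $q$-case coefficient comparison, which relies on $q$ not being a root of unity; the remaining steps are routine.
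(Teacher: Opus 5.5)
Your proof is correct. Note that the paper gives no proof of this lemma; it imports it from \cite[Lemma~2.7]{CDGHL2025}, so there is no in-paper argument to compare against.

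Your route is the natural elementary one. From $p\mid\sigma_y^\ell(p)$ and preservation of $y$-degree you get $\sigma_y^\ell(p)=c\,p$ with $c\in\set F$, and then you compare coefficients. The characteristic-zero hypothesis enters through $d\ell a_d\neq 0$, and the hypothesis that $q$ is not a root of unity enters through the distinctness of the $q^{\ell i}$. Those are exactly the places where they are needed.

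One simplification: the detour through irreducible factors and Proposition~\ref{PROP:props}(ii) is unnecessary. Your coefficient comparison never uses irreducibility. In the usual case it only needs $d\geq 1$. In the $q$-case it shows directly that any nonzero $p$ with $p(q^\ell y)=c\,p(y)$ is a single monomial $a_ky^k$. So you can apply it to $p$ itself. The converse is also immediate, from $\sigma_y(a)=a$ and $\sigma_y(ay^k)=q^k a y^k$. This makes the argument self-contained and avoids any question of logical order with Proposition~\ref{PROP:props}.

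The zero polynomial needs no special convention. Since $0\mid 0$, the zero polynomial is special, and $0\in\set F$ (with $0/y^0\in\set F$), so both sides of each equivalence hold.

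An equally standard alternative is the root-orbit argument. Any root $\alpha$ of $p$ in an algebraic closure of $\set F$ forces $\alpha-\ell,\alpha-2\ell,\dots$ (respectively $q^{-\ell}\alpha,q^{-2\ell}\alpha,\dots$) to be roots as well. These are infinitely many unless, in the $q$-case, $\alpha=0$.
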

Using the theory of $\sigma_y$-normal and $\sigma_y$-special polynomials,
the authors of \cite{CDGHL2025} presented a unified reduction for 
$\sigma_y$-hypergeometric terms, addressing the $\sigma_y$-summability
problem without solving any auxiliary difference equations. To describe it
concisely, we need to recall some terminology.

Based on~\cite{AbPe2002b,CHM2005}, a nonzero rational function in $\set F(y)$ with numerator~$u$ 
and denominator~$v$ is said to be {\em $\sigma_y$-reduced} if $\gcd(u,\sigma_y^\ell(v)) = 1$
for any integer~$\ell$.
For a nonzero rational function $f\in \set F(y)$, there exist
two nonzero rational functions $K,S$ in $\set F(y)$ with $K$ being $\sigma_y$-reduced such that
\[
f  = K\frac{\sigma_y(S)}S.
\]
Such a pair $(K,S)$ will be called a {\em rational normal form} (or an {\em RNF} for short) of~$f$.
Moreover, we call $K$ a {\em kernel} and $S$ a corresponding {\em shell} of~$f$. 
These quantities can be constructed by gcd-calculations (cf.\ \cite{AbPe2002b,CHM2005}).
\begin{definition}
Let $K\in \set F(y)$ with numerator~$u$ and denominator~$v$. 
A nonzero polynomial $p\in \set F[y]$ is said to be {\em strongly coprime} with $K$ if 
$\gcd(u,\sigma_y^\ell(p)) = \gcd(v,\sigma_y^{-\ell}(p)) = 1$ for all $\ell\in \set N$.
\end{definition}
The following notion is crucial in developing the unified reduction. 
\begin{definition}\label{DEF:standard}
A rational function in~$\set F(y)$ with numerator $u$ and denominator $v$ is said to be {\em $\sigma_y$-standard}
if it is $\sigma_y$-reduced, and in the $q$-shift case, $u(0)q^\ell-v(0)\neq 0$ for any negative integer~$\ell$.
\end{definition}
By \cite[Proposition~3.16]{CDGHL2025}, every nonzero rational
function in $\set F(y)$ has a $\sigma_y$-standard kernel which
can be easily computed.

For a nonzero polynomial $p\in \set F[y]$, its degree in~$y$ (or
$y$-degree) is denoted by~$\deg_y(p)$. We will follow the convention that
$\deg_y(0) = -\infty$.  A rational function in $\set F(y)$ is said to
be {\em proper} if the $y$-degree of its numerator is less than that
of its denominator.
Let $K\in \set F(y)$ be $\sigma_y$-standard with numerator~$u$ and denominator~$v$. 
We define an $\set F$-linear map $\phi_K$
from $\set F[y]$ to itself by sending $p$ to $u\sigma_y(p)-vp$ for all $p\in \set F[y]$,
and call it the {\em map for polynomial reduction with respect to~$K$}.
The image of~$\phi_K$, denoted by $\im(\phi_K)$, is an $\set F$-linear subspace of~$\set F[y]$.
Let
\[
\im(\phi_K)^\top = \spanning_{\set F}\big\{y^d\mid d\in \set N \ \text{and}\ d \neq \deg_y(p)\ 
\text{for all}\ p \in \im(\phi_K)\big\}.
\]
Then $\set F[y] = \im(\phi_K) \oplus\im(\phi_K)^\top$, and thus we call $\im(\phi_K)^\top$ the 
{\em standard complement of~$\im(\phi_K)$}.

\begin{definition}\label{DEF:remainder}
Let $K\in \set F(y)$ be $\sigma_y$-standard with denominator~$v$, and let $f\in\set F(y)$. 
Another rational function
$r$ in $\set F(y)$ is called a {\em $\sigma_y$-remainder} of $f$ with respect to~$K$ if 
$f-r\in \im(\Delta_K)$ and $r$ can be written in the form
\begin{equation}\label{EQ:remainder}
r = h + \frac{p}v,
\end{equation}
where $h\in \set F(y)$ is proper with denominator being $\sigma_y$-normal and strongly
coprime with~$K$, and $p\in \im(\phi_K)^\top$. For brevity, we just say that $r$
is a $\sigma_y$-remainder with respect to~$K$ if $f$ is clear from the context.
In addition, we call the denominator of~$h$ the {\em significant denominator} of~$r$.
\end{definition}
The unified reduction developed in~\cite{CDGHL2025} can be stated as follows.
\begin{theorem}[{\cite[Theorem~3.18]{CDGHL2025}}]\label{THM:shellred}
For a $\sigma_y$-hypergeometric term $T$ whose $\sigma_y$-quotient
has a $\sigma_y$-standard kernel $K$ and a corresponding shell~$S$, 
there exists an algorithm computing a rational function $g\in\set F(y)$ 
and a $\sigma_y$-remainder $r$ with respect to~$K$ such that 
\begin{equation}\label{EQ:hyperred}
T = \Delta_y(gH) + rH,
\end{equation}
where $H = T/S$. Moreover, $T$ is $\sigma_y$-summable if and only if $r= 0$.
\end{theorem}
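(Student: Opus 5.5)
Since $T = SH$ with $\sigma_y(H)/H = K$, every identity about $T$ turns into an identity about rational functions. The plan is to write $T = S\cdot H$ and reduce the statement to a decomposition of the rational function $S$ modulo $\im(\Delta_K)$. Indeed, $\Delta_y(gH) = (K\sigma_y(g)-g)H = \Delta_K(g)H$, so \eqref{EQ:hyperred} is equivalent to
\[
S = \Delta_K(g) + r,
\]
with $r$ a $\sigma_y$-remainder of $S$ with respect to~$K$. The algorithm therefore only has to produce such a $g$ and $r$ from $S$. It will proceed in three reductions that successively clean up the denominator of $S$.

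\textbf{Step 1 (shell reduction).} Factor the denominator of $S$ into its $\sigma_y$-normal and $\sigma_y$-special parts; by Lemma~\ref{LEM:special}, the special part is a power of $y$ in the $q$-shift case and trivial in the usual case. For the normal part, use a partial fraction decomposition and the identity $\frac{a}{\sigma_y^{\ell}(d)} = \Delta_K(\cdot) + \frac{a'}{\sigma_y^{\ell\pm 1}(d)}$ to shift each irreducible normal factor along its $\sigma_y$-orbit. The target is a denominator that is strongly coprime with $K$. The $\sigma_y$-reducedness of $K$ guarantees that this shifting terminates: a factor can be moved until it is coprime to $u$ or $v$ in the required direction.

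Standard Hermite-type manipulations, combined with Proposition~\ref{PROP:props}, will also make the remaining normal denominator squarefree in the orbit sense. The output is $S = \Delta_K(g_1) + h + \frac{a}{v}$ plus a special-part term, where $h$ is proper with $\sigma_y$-normal denominator strongly coprime with $K$.

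\textbf{Step 2 (special part).} In the $q$-shift case, the terms with denominator $y^k$ must be absorbed into the polynomial part over $v$. This is exactly where the $\sigma_y$-standard condition $u(0)q^{\ell}\neq v(0)$ for $\ell<0$ is needed. Applying $\Delta_K$ to $y^{-m}$ produces a leading coefficient proportional to $u(0)q^{-m}-v(0)$, which is nonzero, so negative powers of $y$ can be eliminated one by one.

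\textbf{Step 3 (polynomial reduction).} For the remaining numerator $p\in\set F[y]$ over $v$, note that $\Delta_K(w) = \phi_K(w)/v$ for polynomial $w$. Then use the decomposition $\set F[y]=\im(\phi_K)\oplus\im(\phi_K)^\top$ by repeated leading-term cancellation to write $p=\phi_K(w)+p'$ with $p'\in\im(\phi_K)^\top$. Collecting everything gives $r=h+p'/v$ as in Definition~\ref{DEF:remainder}.

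\textbf{Main obstacle: the ``moreover'' part.} If $r=0$, then $T=\Delta_y(gH)$, so $T$ is summable. The converse is the hard direction. Suppose $rH=\Delta_y(G)$ for some $\sigma_y$-hypergeometric $G$. Then $G=wH$ for some rational $w$, and $r=\Delta_K(w)$.

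I would show that $\Delta_K(w)$ can be a remainder only if it is zero, by analysing the denominator of $w$. Take an irreducible normal factor of the denominator of $w$ at the extreme end of its $\sigma_y$-orbit. Its appearance in $K\sigma_y(w)-w$ would either survive in the normal part, which contradicts strong coprimality with $K$, or be cancelled only by $u$ or $v$, which contradicts $\sigma_y$-reducedness. Hence $w$ is a polynomial, up to powers of $y$ controlled by standardness.

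It then follows that $p'\in\im(\phi_K)\cap\im(\phi_K)^\top=\{0\}$ and that $h=0$. This extremal-orbit argument, especially its interaction with the special factor $y$, is the step I expect to be the most delicate.
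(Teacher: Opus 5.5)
Your reformulation $S=\Delta_K(g)+r$ and your existence half are consistent with the reduction in the cited source: the shell reduction, the use of the $\sigma_y$-standard condition to remove negative powers of $y$, and the polynomial reduction behind Fact~\ref{FAC:polyred}. The paper itself does not reprove this theorem. For the converse, Lemma~\ref{LEM:minimal} with $\tilde h=0$ (equivalently \cite[Theorem~3.5]{CDGHL2025}) is exactly the needed ingredient. Step~1 has two small slips:
\begin{itemize}
\item The shifting identity also produces a term $c/v$. Moving up needs $\gcd(v,\sigma_y^{\ell+1}(d))=1$, and moving down needs $\gcd(u,\sigma_y^{\ell}(d))=1$.
\item There is no Hermite-type step. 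Multiplicities cannot be lowered here; you only merge each orbit at one position, chosen as in Lemma~\ref{LEM:stronglycoprime}.
\end{itemize}

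The genuine gap is in the converse. Your mechanism looks at one extreme factor of the denominator of $w$: either it survives, which you say contradicts strong coprimality, or it is cancelled by $u$ or $v$, which you say contradicts reducedness. This fails already in the simplest case. Take $w=1/f$ with $f$ irreducible, $\sigma_y$-normal, and no shift of $f$ dividing $uv$. Then $\Delta_K(w)=u/(v\,\sigma_y(f))-1/f$. Nothing is cancelled, and each of $f$ and $\sigma_y(f)$ is, on its own, strongly coprime with $K$. So looking at one end yields no contradiction. What is actually violated is the $\sigma_y$-normality of the significant denominator $d$, because \emph{both} ends survive. Your sketch never invokes that hypothesis.

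A working argument is as follows.
\begin{itemize}
\item Write $w=a/b$ in lowest terms. Let $f$ be an irreducible $\sigma_y$-normal factor of $b$, and let $M\ge N$ be the largest and smallest $i$ with $\sigma_y^i(f)\mid b$.
\item Top end: $\sigma_y^{M+1}(f)$ occurs in the denominator of $K\sigma_y(w)-w$ unless it divides $u$. If it divides $v$, it occurs with multiplicity strictly larger than its multiplicity in $v$. Since $\gcd(d,v)=1$, $h+p'/v$ cannot match this.
\item Bottom end: $\sigma_y^{N}(f)$ occurs unless it divides $v$.
\item By reducedness the orbit of $f$ meets at most one of $u$ and $v$. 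Split into three cases: it meets neither, only $u$, or only $v$. In every case you get one of three contradictions:
  \begin{itemize}
  \item two distinct shifts of $f$ divide $d$, contradicting $\sigma_y$-normality;
  \item a shift of a factor of $d$ in a forbidden direction divides $u$ or $v$, contradicting strong coprimality;
  \item the multiplicity excess at a factor of $v$.
  \end{itemize}
\item Only now is the denominator of $w$ a power of $y$ (trivial in the usual shift case). Hence $h=0$, since $d$ is coprime to $v$ and, being $\sigma_y$-normal, also to $y$.
\item Your standardness step must then be made explicit. Suppose $w$ has lowest term $c\,y^{-s}$ with $s>0$. The coefficient of $y^{-s}$ in $u\sigma_y(w)-vw=p'$ is $c\,(u(0)q^{-s}-v(0))$, which is nonzero, so $p'$ is not a polynomial, a contradiction.
\item Therefore $w\in\set F[y]$, so $p'\in\im(\phi_K)\cap\im(\phi_K)^\top=\{0\}$.
\end{itemize}
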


In the context of creative telescoping, we will also need to consider the variable~$x$.
Let $\set F[S_x]$ be the ring of linear recurrence operators in~$x$ over
$\set F$, in which the commutation rule is that $S_xf =
\sigma_x(f)S_x$ for all $f\in \set F$.  The application of an operator
$L = \sum_{i=0}^\rho \ell_iS_x^i\in \set F[S_x]$ to a
$(\sigma_x,\sigma_y)$-hypergeometric term $T$ is defined as
\[
L(T) = \sum_{i=0}^\rho \ell_i\sigma_x^i(T).
\]
\begin{definition}
Let $T$ be a $(\sigma_x,\sigma_y)$-hypergeometric term.  A nonzero
linear recurrence operator $L\in \set F[S_x]$ is called a {\em telescoper} for $T$ if there exists a
$(\sigma_x,\sigma_y)$-hypergeometric term $G$ such that
\[
L(T) = \Delta_y(G).
\] 
We call $G$ a corresponding {\em certificate} of~$L$. The {\em order}
of a telescoper for~$T$ is defined to be its degree in $S_x$.
\end{definition}

Based on the unified reduction, an efficient creative telescoping algorithm for
constructing a telescoper for a $(\sigma_x,\sigma_y)$-hypergeometric
term was further proposed in~\cite{CDGHL2025}. We briefly summarize its 
main idea below.

Let $T$ be a $(\sigma_x,\sigma_y)$-hypergeometric term 
whose $\sigma_y$-quotient has a $\sigma_y$-standard 
kernel $K$ and a corresponding shell~$S$. Applying
Theorem~\ref{THM:shellred} to $T$ yields
\[
T = \Delta_y(g_0H) + r_0H,
\]
where $H=T/S$, $g_0\in \set F(y)$ and $r_0$ is a $\sigma_y$-remainder 
with respect to~$K$. If the significant denominator of $r_0$ is not
integer-linear (see Definition~\ref{DEF:intlinear}), then by existence 
criterion (cf.\ \cite{Abra2003,CHM2005}), $T$ admits no telescopers, 
so the algorithm terminates. Otherwise, telescopers for~$T$ must exist. 
One proceeds by applying 
Theorem~\ref{THM:shellred}, along with a manipulation of resulting
$\sigma_y$-remainders by \cite[Theorem~4.1]{CDGHL2025},
to $\sigma_x^i(T)$ 
incrementally with $i = 1,2,\dots$ to get $g_i\in \set F(y)$ and a 
$\sigma_y$-remainder~$r_i$ with respect to $K$ such that
\begin{equation}\label{EQ:rct}
\sigma_x^i(T) = \Delta_y(g_iH) + r_iH \quad \text{and}\quad
\sum_{j=0}^ic_jr_j\ \text{is a $\sigma_y$-remainder for all $c_j\in \set F$,}
\end{equation}
and finding a nontrivial $\set F$-linear dependency among the $r_i$.
Any such an $\set F$-linear dependency will trigger the algorithm to
return a desired telescoper for~$T$.

The termination of the above algorithm is guaranteed by existence
criterion, see Theorem~5.7 in~\cite{CDGHL2025} for more details. 
Instead of using the criterion, in the rest of the paper we 
are going to prove the termination by showing that the
$\sigma_y$-remainders $r_i$ produced in \eqref{EQ:rct} form a 
finite-dimensional vector space over~$\set F$, and along the way
obtain upper and lower bounds on the order of telescopers,
generalizing the results in \cite{Huan2016} to also cover the 
$q$-shift case.

\section{Relations between remainders}\label{SEC:rem}
We first discuss in 
this section an inherent relation between two $\sigma_y$-remainders relevant to 
different $x$-shifts of a $(\sigma_x,\sigma_y)$-hypergeometric term.
The following definition plays a crucial role in describing such a relation.
\begin{definition}\label{DEF:srelated}
Two $\sigma_y$-normal polynomials $a,b\in \set F[y]$ are called {\em $\sigma_y$-related} if 
for any irreducible factor $f\in\set F[y]$ of~$a$ over~$\set F$ with multiplicity~$k$, there exists 
an integer $\ell$ such that $\sigma_y^\ell(f)$ is a factor of~$b$ with the same multiplicity~$k$, 
and vice versa.
\end{definition}
Note that the integer $\ell$ in the above definition is unique since both $a$ and $b$ 
are $\sigma_y$-normal. Clearly, the $\sigma_y$-relatedness is an equivalence relation, and
two $\sigma_y$-related polynomials in $\set F[y]$ have the same $y$-degree.

The proof of Lemma~3.2 in \cite{CDGHL2025} actually contains a more generalized
conclusion given below, which will be useful in exploring the desired relation.
\begin{lemma}\label{LEM:minimal}
Let $K\in \set F(y)$ with denominator~$v$, and let $h\in \set F(y)$ be
a rational function whose denominator $d$ is $\sigma_y$-normal and
strongly coprime with~$K$. Assume that there are $\tilde h\in \set F(y)$
and $p\in \set F[y]$ such that
\[
h - \tilde h + \frac{p}v \in \im(\Delta_K).
\]
Then for any irreducible factor $a\in \set F[y]$ of~$d$ over~$\set F$ with
multiplicity~$k$, there exists an integer~$\ell$ such that
$\sigma_y^\ell(a)^k$ divides the denominator of~$\tilde h$.
\end{lemma}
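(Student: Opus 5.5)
We argue by contradiction. Fix an irreducible factor $a$ of $d$ with multiplicity $k$, and suppose that for every integer $\ell$ the power $\sigma_y^\ell(a)^k$ does not divide the denominator $\tilde d$ of~$\tilde h$. By hypothesis there is $g\in\set F(y)$ with
\[
h-\tilde h+\frac{p}{v}=K\sigma_y(g)-g.
\]
The strategy is to compare how the orbit $\{\sigma_y^\ell(a)\}_{\ell\in\set Z}$ enters the denominators of the two sides. Since $d$ is $\sigma_y$-normal, $a$ divides $d$ with multiplicity exactly $k$, and no other element of the orbit divides $d$. Since $d$ is strongly coprime with $K$, the element $a$ does not divide $v$; more generally, for $\ell\ge 0$ we have $\sigma_y^\ell(a)\nmid u$, and for $\ell\le 0$ we have $\sigma_y^{\ell}(a)\nmid v$.

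On the left-hand side the orbit part of the denominator is controlled: $a^k$ appears from $h$. By assumption it cannot be cancelled by $\tilde h$, since $\tilde h$ contains no orbit element with multiplicity at least $k$. The term $p/v$ contributes nothing at $a$, because $a\nmid v$. Hence $a^k$ divides the denominator of the left-hand side, and every other orbit element $\sigma_y^\ell(a)$ with $\ell\neq 0$ occurs there with multiplicity less than $k$. The only exceptions are orbit elements dividing $v$; these must satisfy $\ell>0$, since $\sigma_y^\ell(a)\nmid v$ for $\ell\le 0$. Such factors are divided only once, because we never multiply by $v$.

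On the right-hand side, write $g$ with denominator $e$. Let $\sigma_y^{m}(a)$ and $\sigma_y^{M}(a)$ be the orbit elements in $e$ of extreme indices ($m$ minimal, $M$ maximal) among those whose multiplicity is at least $k$. The plan is the standard dispersion argument of Abramov--Petkov\v{s}ek type. In $K\sigma_y(g)$ the factor $\sigma_y^{m-1}(a)$ occurs through $\sigma_y(e)$ with the same multiplicity. It can be cancelled only by $u$, but $u$ and $\sigma_y^\ell(a)$ are coprime for $\ell\ge0$, so cancellation fails whenever $m-1\ge 0$. Likewise, in $g$ the factor $\sigma_y^{M}(a)$ survives, since it cannot be cancelled against $\sigma_y(e)$ or $v$ when $M\le 0$ or when the multiplicities align. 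If no orbit element of multiplicity $\ge k$ occurs in $e$ at all, then the right-hand side has all orbit multiplicities below $k$ except through $v$; but $a\nmid v$, which contradicts $a^k$ dividing the left-hand denominator.

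If such elements do occur, we derive a contradiction by showing that one of the extreme positions $m-1$ or $M$ produces an orbit element of multiplicity $\ge k$ on the right-hand side at a nonzero index that is unmatched on the left. A case split on the signs of $m-1$ and $M$ uses the strong coprimality to rule out cancellation by $u$ or $v$. Finally, the index-zero element $a^k$ on the left forces $0\in\{m-1,M\}$, and we combine this with the extremal analysis on the other side.

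I expect this bookkeeping to be the main obstacle: tracking the multiplicities along the whole orbit simultaneously for $g$, $\sigma_y(g)$, $u$ and $v$, and handling the boundary indices where $u$ or $v$ could absorb a factor. I would organize it by restricting everything to the $\sigma_y^\ell(a)$-adic valuations $\nu_\ell$ for $\ell\in\set Z$, and turning the identity into the inequalities
\[
\nu_\ell(\text{LHS})=\nu_\ell\bigl(K\sigma_y(g)-g\bigr).
\]
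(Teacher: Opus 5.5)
Your overall plan is the right one: valuations along the orbit of $a$, plus an extremal-index argument. The paper gives no proof of its own and defers to the proof of Lemma~3.2 in \cite{CDGHL2025}. Your set-up is also correct. With $\nu_\ell$ the $\sigma_y^\ell(a)$-adic valuation and $L=h-\tilde h+p/v$, strong coprimality gives $\nu_\ell(u)=0$ for $\ell\ge 0$ and $\nu_\ell(v)=0$ for $\ell\le 0$. Hence $\nu_0(L)=-k$, $\nu_\ell(L)>-k$ for $\ell<0$, and $\nu_\ell(L)>-k-\nu_\ell(v)$ for $\ell>0$. Your case where no orbit element has multiplicity $\ge k$ in $e$ is handled correctly.

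The gap is in the main case, where you have the shift direction reversed. Since $\nu_\ell(\sigma_y(g))=\nu_{\ell-1}(g)$, we have $\nu_\ell(K\sigma_y(g))=\nu_\ell(u)-\nu_\ell(v)+\nu_{\ell-1}(g)$. This makes two of your claims false:
\begin{itemize}
\item $\sigma_y^{m-1}(a)$ occurs in $\sigma_y(e)$ with multiplicity $\nu_{m-2}(e)<k$, not with the same multiplicity as $\sigma_y^m(a)$ in $e$.
\item $\sigma_y^{M}(a)$ need not survive in $g$, because $\sigma_y(e)$ contains it with multiplicity $\nu_{M-1}(e)$, which can be $\ge k$.
\end{itemize}
Even granting both claims, they bound $m$ from above and $M$ from below. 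That is compatible with $m\le M$, so no contradiction results. The closing step (``$0\in\{m-1,M\}$, combined with the other side'') is asserted without argument, so the decisive step of the proof is missing.

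What is needed is the opposite pair of extreme positions.
\begin{itemize}
\item \emph{Index $M+1$.} By maximality of $M$, $\nu_{M+1}(g)>-k$. If $M+1\ge 0$, then $\nu_{M+1}(u)=0$, so $\nu_{M+1}(K\sigma_y(g))=\nu_M(g)-\nu_{M+1}(v)\le -k-\nu_{M+1}(v)$, which is strictly below $\nu_{M+1}(g)$. Hence $\nu_{M+1}(K\sigma_y(g)-g)\le -k-\nu_{M+1}(v)$. For $M+1>0$ this contradicts $\nu_{M+1}(L)>-k-\nu_{M+1}(v)$, so $M\le -1$.
\item \emph{Index $m$.} By minimality of $m$, $\nu_{m-1}(g)>-k$. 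If $m\le 0$, then $\nu_m(v)=0$, so $\nu_m(K\sigma_y(g))\ge \nu_{m-1}(g)>-k\ge \nu_m(g)$ and therefore $\nu_m(K\sigma_y(g)-g)=\nu_m(g)\le -k$. For $m<0$ this contradicts $\nu_m(L)>-k$, so $m\ge 0$.
\end{itemize}
Thus $0\le m\le M\le -1$, which is impossible. Only the case you already treated remains, where the contradiction comes from index $0$.

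This argument uses neither $\sigma_y$-reducedness of $K$ nor anything about $p$ beyond $\nu_\ell(p/v)\ge -\nu_\ell(v)$, which matches the hypotheses of the statement.
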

For a nonzero rational function in $\set F(y)$, the following proposition reveals the relation
between $\sigma_y$-remainders of its two shells with respect to the same kernel.
\begin{proposition}\label{PROP:onekernel}
Let $(K,S)$ and $(K,\tilde S)$ be two RNFs of a nonzero rational function $f\in\set F(y)$,
with $K$ being $\sigma_y$-standard.
Let $r$ and $\tilde r$ be $\sigma_y$-remainders of~$S$ and $\tilde S$ with respect to~$K$, respectively. 
Then the significant denominators of $r$ and $\tilde r$ are $\sigma_y$-related.
\end{proposition}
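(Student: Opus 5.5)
The plan is to relate the two shells through a rational function and then turn the relation between $S$ and $\tilde S$ into a relation between their remainders modulo $\im(\Delta_K)$, so that Lemma~\ref{LEM:minimal} can be applied in both directions.

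\textbf{Step 1: the two shells differ by a $\sigma_y$-invariant up to a rational factor.} Since $K\sigma_y(S)/S = f = K\sigma_y(\tilde S)/\tilde S$, the quotient $c=\tilde S/S\in\set F(y)$ satisfies $\sigma_y(c)=c$. In both the usual shift and the $q$-shift case, the $\sigma_y$-invariant elements of $\set F(y)$ are exactly $\set F$. Hence $\tilde S = cS$ with $c\in\set F\setminus\{0\}$.

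\textbf{Step 2: reduce to one rational function with two remainders.} By Definition~\ref{DEF:remainder}, the differences $S-r$ and $\tilde S-\tilde r$ lie in $\im(\Delta_K)$. Since $\Delta_K$ is $\set F$-linear, $cS-cr\in\im(\Delta_K)$ as well, and therefore
\[
cr-\tilde r\in\im(\Delta_K).
\]
We write $r=h+p/v$ and $\tilde r=\tilde h+\tilde p/v$ as in \eqref{EQ:remainder}. Then $ch+(cp-\tilde p)/v-\tilde h\in\im(\Delta_K)$. The significant denominator $d$ of $r$ is also the denominator of $ch$; it is $\sigma_y$-normal and strongly coprime with $K$.

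\textbf{Step 3: apply Lemma~\ref{LEM:minimal} both ways.} With $ch$ in the role of $h$ and $cp-\tilde p$ in the role of $p$, Lemma~\ref{LEM:minimal} gives the following: for every irreducible factor $a$ of $d$ with multiplicity $k$, some shift $\sigma_y^\ell(a)^k$ divides the denominator $\tilde d$ of $\tilde h$. Exchanging the roles of $r$ and $\tilde r$ (using $c^{-1}\tilde r - r\in\im(\Delta_K)$) gives the symmetric statement for $\tilde d$.

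\textbf{Step 4: upgrade divisibility to equal multiplicities.} This is the main obstacle, since the lemma only yields divisibility, while Definition~\ref{DEF:srelated} requires the exact multiplicity $k$. Let $a$ have multiplicity $k$ in $d$. Then $\sigma_y^\ell(a)$ divides $\tilde d$ with some multiplicity $m\ge k$. Applying the reverse direction to $\sigma_y^\ell(a)$ yields a shift $\sigma_y^{\ell'}(a)$ dividing $d$ with multiplicity at least $m$. Because $d$ is $\sigma_y$-normal, $a$ and $\sigma_y^{\ell'}(a)$ cannot both divide $d$ unless $\ell'=0$. Hence $k\ge m\ge k$, so $m=k$.

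Finally, the shift $\ell$ is unique because $\tilde d$ is $\sigma_y$-normal. This gives exactly the condition of Definition~\ref{DEF:srelated}, so $d$ and $\tilde d$ are $\sigma_y$-related.
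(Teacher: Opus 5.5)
Your proof is correct and follows essentially the same route as the paper. Both arguments show that $\tilde S=cS$ with $c\in\set F$, deduce $c\,r-\tilde r\in\im(\Delta_K)$, apply Lemma~\ref{LEM:minimal} in both directions, and use the $\sigma_y$-normality of $d$ to force equal multiplicities. The only difference is that the paper treats the case $d\in\set F$ separately via \cite[Theorem~3.5]{CDGHL2025}, whereas your two-sided use of Lemma~\ref{LEM:minimal} already covers it: a nonconstant $\tilde d$ would force a shift of one of its irreducible factors to divide the constant $d$, which is impossible.
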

\begin{proof}
Since $(K,S)$ and $(K,\tilde S)$ are both RNFs of~$f$, we have
$f=K\sigma_y(S)/S=K\sigma_y(\tilde S)/\tilde S$,
and then $\sigma_y(\tilde S/S) = \tilde S/S$. Notice that in the $q$-shift case, $q$ is not a root of unity.
So it follows from~\cite[Lemma~2.6]{CDGHL2025} that in both usual shift and $q$-shift cases,
$\tilde S/S\in \set F$, that is, $\tilde S = c S$ for some nonzero element $c\in \set F$.
Since $r$ and $\tilde r$ are $\sigma_y$-remainders of~$S$ and $\tilde S$
with respect to~$K$, respectively, we have $S-r \in \im(\Delta_K)$ and
$\tilde S - \tilde r \in \im(\Delta_K)$. Thus $c\,r-\tilde r\in \im(\Delta_K)$.

Let $d$ and $\tilde d$ in $\set F[y]$ be the significant denominators of~$r$ 
and~$\tilde r$, respectively, and write
\[
r = h + \frac{p}v\quad\text{and}\quad \tilde r= \tilde h + \frac{\tilde p}v,
\]
where $h,\tilde h\in \set F(y)$ are proper with denominators $d,\tilde d$, respectively,
$p,\tilde p\in \im(\phi_K)^\top$,
and $v$ is the denominator of~$K$.
It follows that
\[
c\,h-\tilde h + \frac{c\,p-\tilde p}v \in \im(\Delta_K).
\]
Note that $c\,h$ and $\tilde h$ are both proper rational functions in~$\set F(y)$
whose denominators $d$ and $\tilde d$ 
are $\sigma_y$-normal and strongly coprime with~$K$.
If $d\in \set F$, then $h = 0$ and thus $-\tilde h+(c\,p-\tilde p)/v
\in \im(\Delta_K)$, yielding $\tilde h = 0$ by \cite[Theorem~3.5]{CDGHL2025},
so $\tilde d\in \set F$ and  the assertion clearly holds.
Assume that $d\notin \set F$ and let $a\in\set F[y]$ be an irreducible factor of~$d$ over~$\set F$ 
with multiplicity~$k$.
It then follows from Lemma~\ref{LEM:minimal} that there exists an integer $\ell$ such that 
$\sigma_y^\ell(a)^k$ divides~$\tilde d$. This implies that $\sigma_y^\ell(a)$ is an irreducible
factor of~$\tilde d$ over~$\set F$ whose multiplicity, say~$k'$, is at least~$k$. 
By applying Lemma~\ref{LEM:minimal} to~$\tilde h$, we see that 
$\sigma^{\ell+i}(a)^{k'}$ divides $d$ for some $i\in \set Z$. 
Since $d$ is $\sigma_y$-normal and $a$ is a factor of $d$ with multiplicity~$k$, 
we have $i = -\ell$ and $k'\leq k$, which yields that $k'=k$.
Thus $\sigma_y^\ell(a)$ is a factor of~$\tilde d$ with multiplicity~$k$. 
Now by switching the roles of $c\,h$ and $\tilde h$ (and thus $d$ and $\tilde d$) in the above arguments,
we conclude that for any irreducible factor $\tilde a\in\set F[y]$ of~$\tilde d$ over~$\set F$
with multiplicity~$\tilde k$, there exists an integer $\tilde \ell$ such that $\sigma_y^{\tilde \ell}(\tilde a)$
is a factor of~$d$ with the same multiplicity~$\tilde k$. 
Therefore, by definition, $d$ and $\tilde d$ are $\sigma_y$-related.
\end{proof}

In order to study the case of different kernels, we recall some notions 
from~\cite{CDGHL2025}. 

A polynomial $p$ in $\set F[y]$ is said to be 
{\em $\sigma_y$-monic} if it is monic with respect to~$y$ in the usual shift case
or $q$-monic, namely $p\mid_{y=0}= 1$, in the $q$-shift case. A rational function in 
$\set F(y)$ is {\em $\sigma_y$-monic} if both its numerator and denominator 
are $\sigma_y$-monic. By a factor of a rational function in~$\set F(y)$, we mean 
a factor of either its numerator or its denominator. Let $f\in \set F(y)$ be 
$\sigma_y$-monic. Lemma~\ref{LEM:special} then tells us that all irreducible 
factors of~$f$ are $\sigma_y$-normal. Moreover, $\sigma_y^\ell(f)$ for all 
$\ell\in \set Z$ is again $\sigma_y$-monic. 
Let $p$ be a nonzero polynomial in~$\set F[y]$ and 
let $\alpha = \sum_{i=m}^nk_i\sigma_y^i$ be a Laurent polynomial in 
$\set Z[\sigma_y,\sigma_y^{-1}]$. We define
\[
p^\alpha = \prod_{i=m}^n\sigma_y^i(p)^{k_i}.
\]
Clearly, $p^\alpha$ is a polynomial if and only if $\alpha$ belongs to 
$\set N[\sigma_y,\sigma_y^{-1}]$. Recall that two polynomials $a,b$ 
in $\set F[y]$ are {\em associates} if $a = c\,b$ for $c\in \set F$.
According to \cite[Definition~11]{Karr1981} and \cite[Definition~1]{AbPe2002b}, 
two polynomials $a,b\in \set F[y]$ are said to be {\em $\sigma_y$-equivalent} if 
$a$ is an associate of~$\sigma_y^\ell(b)$ for some $\ell\in \set Z$. Evidently, this 
gives an equivalence relation, and the $\sigma_y$-equivalence of two polynomials
can be easily recognized by comparing coefficients. 

Let $f$ be a rational function in~$\set F(y)$. 
Separating $\sigma_y$-normal and $\sigma_y$-special irreducible factors 
of~$f$ over~$\set F$, we can decompose it as
\begin{equation}\label{EQ:splitfac}
f = f_sf_n,
\end{equation}
where $f_s,f_n\in \set F(y)$, the numerator and denominator of~$f_s$
are $\sigma_y$-special, and every irreducible factor of~$f_n$ over~$\set F$ is 
$\sigma_y$-normal. We will refer to \eqref{EQ:splitfac} as a 
{\em $\sigma_y$-splitting factorization} of~$f$. Such a factorization becomes
unique if we further require that $f_n$ be $\sigma_y$-monic. Grouping together 
$\sigma_y$-equivalent irreducible factors of~$f_n$ over~$\set F$, we further 
write $f$ in the form
\begin{equation}\label{EQ:sigmafac}
f = f_s p_1^{\alpha_1}\cdots\, p_m^{\alpha_m},
\end{equation}
where $m\in \set N, \alpha_i\in \set Z[\sigma_y,\sigma_y^{-1}]\setminus\{0\}$,
each $p_i\in \set F[y]$ is $\sigma_y$-monic and irreducible over~$\set F$,
and the~$p_i$ are pairwise $\sigma_y$-inequivalent. We call \eqref{EQ:sigmafac}
a {\em $\sigma_y$-factorization} of~$f$. 
\begin{lemma}\label{LEM:redprop}
Let $r\in \set F(y)$ with a $\sigma_y$-splitting factorization $r = r_s r_n$.
Assume that $r_n$ is $\sigma_y$-monic, and $r = \sigma_y(f)/f$ for some 
nonzero rational function $f\in \set F(y)$. Write $r_n = a/d$ with $a,d\in\set F[y]$ 
and $\gcd(a,d) = 1$. Then 
\begin{itemize}
\item[(i)] $r_s$ is equal to one in the usual shift case, or it is a power of $q$
in the $q$-shift case.
\item[(ii)] there exists a one-to-one correspondence $\varphi$ between the 
multi-sets of $\sigma_y$-monic irreducible factors of~$a$ and $d$ over~$\set F$ such that 
$p$ and $\varphi(p)$ are $\sigma_y$-equivalent for all $p\in\set F[y]$ dividing~$a$.
\end{itemize}
\end{lemma}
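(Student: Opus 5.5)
We split $f$ accordingly. Take a $\sigma_y$-splitting factorization $f = f_s f_n$ with $f_n$ $\sigma_y$-monic. Since $\sigma_y$ maps $\sigma_y$-special polynomials to $\sigma_y$-special ones, and $\sigma_y$-normal irreducibles to $\sigma_y$-normal irreducibles, we get
\[
\frac{\sigma_y(f)}{f} = \frac{\sigma_y(f_s)}{f_s}\cdot\frac{\sigma_y(f_n)}{f_n}.
\]
Here the first factor has $\sigma_y$-special numerator and denominator. Every irreducible factor of the second factor is $\sigma_y$-normal, and the second factor is $\sigma_y$-monic because shifts of $\sigma_y$-monic functions are $\sigma_y$-monic. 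The $\sigma_y$-splitting factorization with monic normal part is unique, so
\[
r_s = \frac{\sigma_y(f_s)}{f_s}, \qquad r_n = \frac{\sigma_y(f_n)}{f_n}.
\]
Both parts of the lemma then reduce to computing these two quotients.

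For (i), use Lemma~\ref{LEM:special}. In the usual shift case $f_s\in\set F$, so $\sigma_y(f_s)/f_s=1$. In the $q$-shift case $f_s = c\,y^k$ with $c\in\set F$ and $k\in\set Z$; the negative exponent is allowed because $f_s$ is a quotient of special polynomials. Then $\sigma_y(f_s)/f_s = q^k$, a power of $q$.

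For (ii), write $f_n = p_1^{\alpha_1}\cdots p_m^{\alpha_m}$ as in the $\sigma_y$-factorization~\eqref{EQ:sigmafac}, with the $p_i$ $\sigma_y$-monic, irreducible and pairwise $\sigma_y$-inequivalent. Then
\[
r_n = \prod_i p_i^{(\sigma_y-1)\alpha_i}.
\]
Fix $i$ and write $\beta_i=(\sigma_y-1)\alpha_i=\sum_j b_{ij}\sigma_y^j$. Its coefficients satisfy $\sum_j b_{ij}=0$, since $\sigma_y-1$ vanishes at $\sigma_y=1$.

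The main point is that no cancellation occurs between different $i$ or different $j$. This is because the factors $\sigma_y^j(p_i)$ are pairwise non-associate irreducibles. For fixed $i$ they are distinct since $p_i$ is $\sigma_y$-normal. For different $i$ they are distinct by $\sigma_y$-inequivalence, and being $\sigma_y$-monic they are genuinely distinct as polynomials. Hence, in lowest terms,
\[
a = \prod_i\prod_{b_{ij}>0}\sigma_y^j(p_i)^{b_{ij}}, \qquad d = \prod_i\prod_{b_{ij}<0}\sigma_y^j(p_i)^{-b_{ij}}.
\]
The monic irreducible factors of $a$ and $d$ in the class of $p_i$ therefore form multisets of equal cardinality $\sum_{b_{ij}>0}b_{ij}=\sum_{b_{ij}<0}(-b_{ij})$. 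Pairing them arbitrarily within each class $i$ gives the bijection $\varphi$, and each $p$ is $\sigma_y$-equivalent to $\varphi(p)$ by construction.

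The only delicate step is the uniqueness claim identifying $r_n$ with $\sigma_y(f_n)/f_n$. This needs $\sigma_y$ to preserve special-ness, normality, irreducibility and $\sigma_y$-monicity. In the $q$-case a $q$-monic polynomial remains $q$-monic under $y\mapsto qy$, so these properties all hold.
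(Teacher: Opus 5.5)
Your proof is correct and follows essentially the same route as the paper: split off the special part $f_s$, use uniqueness of the $\sigma_y$-splitting factorization to identify $r_s=\sigma_y(f_s)/f_s$ and $r_n=\prod_i p_i^{(\sigma_y-1)\alpha_i}$, then invoke Lemma~\ref{LEM:special} for (i) and the vanishing coefficient sum of $(\sigma_y-1)\alpha_i$ for (ii). Your no-cancellation argument (the shifts $\sigma_y^j(p_i)$ are pairwise non-associate) spells out a step the paper leaves implicit.
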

\begin{proof}
Assume that $f$ admits a $\sigma_y$-factorization of the form~\eqref{EQ:sigmafac}. 
Since $r = \sigma_y(f)/f$, we get
\[
r_sr_n = \frac{\sigma_y(f_s)}{f_s}\prod_{i=1}^mp_i^{\beta_i},
\]
where $\beta_i = \sigma_y\alpha_i - \alpha_i\neq 0$ for all $i=1,\dots,m$.
Notice that the numerator and denominator of~$f_s$ are $\sigma_y$-special
and all the~$p_i$ are $\sigma_y$-monic and irreducible over~$\set F$.
Since $r_n$ is $\sigma_y$-monic, we conclude from the uniqueness of
$\sigma_y$-splitting factorizations that 
\[
r_s = \frac{\sigma_y(f_s)}{f_s}\quad\text{and}\quad 
r_n = \prod_{i=1}^mp_i^{\beta_i}.
\]
Part (i) then immediately follows by Lemma~\ref{LEM:special}.
In order to verify part (ii), it suffices to show that the total sum of all 
coefficients of each $\beta_i$ with respect to~$\sigma_y$ is zero, 
which is in turn evident by the definition of the~$\beta_i$.
\end{proof}

The following explores the connection between two RNFs of a nonzero 
rational function in~$\set F(y)$. We remark that the usual shift version 
was already given in \cite[Theorem~2]{AbPe2002b}.
\begin{lemma}\label{LEM:twornfs}
Let $(K,S)$ and $(\tilde K,\tilde S)$ be two RNFs of a nonzero rational function $f\in\set F(y)$. Let
\[
K = K_sK_n\quad\text{and}\quad \tilde K = \tilde K_s\tilde K_n
\]
be $\sigma_y$-splitting factorizations of $K$ and~$\tilde K$, respectively, with $K_n,\tilde K_n$ 
being $\sigma_y$-monic. Then
\begin{itemize}
\item[(i)] $K_s/\tilde K_s$ is equal to one in the usual shift case, or it is a power of~$q$ in the $q$-shift case.
\item[(ii)] there exists a one-to-one correspondence $\varphi$ between the multi-sets of 
$\sigma_y$-monic irreducible factors of the denominators of~$K_n$ and $\tilde K_n$ over~$\set F$ such that 
$a$ and $\varphi(a)$ are $\sigma_y$-equivalent for all $a\in\set F[y]$ dividing the denominator of~$K_n$.
\item[(iii)] there exists a one-to-one correspondence $\psi$ between the multi-sets of 
$\sigma_y$-monic irreducible factors of the numerators of~$K_n$ and $\tilde K_n$ over $\set F$ such that 
$b$ and $\psi(b)$ are $\sigma_y$-equivalent for all $b\in\set F[y]$ dividing the numerator of~$K_n$.
\end{itemize}
\end{lemma}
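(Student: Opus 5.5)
From the two RNFs we have $K\sigma_y(S)/S = f = \tilde K\sigma_y(\tilde S)/\tilde S$. Setting $g = S/\tilde S$, this gives
\[
\frac{K}{\tilde K} = \frac{\sigma_y(\tilde S)/\tilde S}{\sigma_y(S)/S} = \frac{\sigma_y(g^{-1})}{g^{-1}}.
\]
Put $r = K/\tilde K = (K_s/\tilde K_s)(K_n/\tilde K_n)$. I first check that this is a $\sigma_y$-splitting factorization of $r$ with $r_s = K_s/\tilde K_s$ and $r_n = K_n/\tilde K_n$. Numerators and denominators of $K_s,\tilde K_s$ are $\sigma_y$-special, and products of special polynomials are special by Proposition~\ref{PROP:props}(ii). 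Every irreducible factor of $K_n/\tilde K_n$ (after cancellation) is $\sigma_y$-normal, and $K_n/\tilde K_n$ is $\sigma_y$-monic because quotients of $\sigma_y$-monic polynomials remain $\sigma_y$-monic after reduction. Lemma~\ref{LEM:redprop}(i) then gives part~(i) immediately.

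For (ii) and (iii), write $K_n = u/v$ and $\tilde K_n = \tilde u/\tilde v$ in lowest terms, with $u,v,\tilde u,\tilde v$ $\sigma_y$-monic. Then $r_n = u\tilde v/(v\tilde u)$. Because $K$ is $\sigma_y$-reduced, $\gcd(u,\sigma_y^\ell(v))=1$ for all $\ell$, and the same holds for $\tilde u,\tilde v$. The key step is to show that no irreducible factor of $u$ is $\sigma_y$-equivalent to a factor of $v$ or of $\tilde v$, and symmetrically. Concretely, I would track multiplicities inside each $\sigma_y$-equivalence class. Fix a $\sigma_y$-monic irreducible $p$ and its orbit $\{\sigma_y^i(p)\}$; these are all $\sigma_y$-monic irreducible and pairwise non-associate, since $p$ is $\sigma_y$-normal. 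Let $N_u,N_v,N_{\tilde u},N_{\tilde v}$ be the total multiplicities of orbit elements in $u,v,\tilde u,\tilde v$.

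Lemma~\ref{LEM:redprop}(ii), applied to $r_n$ in reduced form, says the numerator and denominator have the same number of factors in each orbit. Cancellation between $u\tilde v$ and $v\tilde u$ removes equal amounts from both sides, so the count identity holds before cancellation as well: $N_u + N_{\tilde v} = N_v + N_{\tilde u}$. Since $K$ is $\sigma_y$-reduced, at most one of $N_u,N_v$ is nonzero, and likewise at most one of $N_{\tilde u},N_{\tilde v}$. A short case check follows. If $N_u>0$, then $N_v=0$, so $N_u+N_{\tilde v}=N_{\tilde u}>0$, which forces $N_{\tilde v}=0$ and $N_{\tilde u}=N_u$. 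If $N_v>0$, then $N_u=0$ and $N_{\tilde v}=N_v+N_{\tilde u}>0$, so $N_{\tilde u}=0$ and $N_{\tilde v}=N_v$. If $N_u=N_v=0$, then $N_{\tilde v}=N_{\tilde u}$, and both vanish by reducedness of $\tilde K$. Hence $N_v=N_{\tilde v}$ and $N_u=N_{\tilde u}$ for every orbit.

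Matching factors orbit by orbit with equal counts gives the bijections $\varphi$ for (ii) and $\psi$ for (iii), each pairing $\sigma_y$-equivalent factors. The main obstacle is the bookkeeping around cancellation in $r_n$ and around the fact that Lemma~\ref{LEM:redprop}(ii) refers to the reduced fraction. I would handle this by phrasing everything through the orbit-count identity, which is invariant under cancellation. I also need to verify that $\sigma_y$-monic irreducible factors are uniquely determined, so that counting them is well defined.
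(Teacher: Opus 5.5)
Your proposal is correct and follows the paper's proof. You write $K/\tilde K=\sigma_y(\tilde S/S)/(\tilde S/S)$, identify its $\sigma_y$-splitting factorization as $(K_s/\tilde K_s)(K_n/\tilde K_n)$, and apply Lemma~\ref{LEM:redprop}. Your orbit-count case analysis, combining $N_u+N_{\tilde v}=N_v+N_{\tilde u}$ with the reducedness of $K_n$ and $\tilde K_n$, spells out carefully the step the paper compresses into ``the observation that both $K_n$ and $\tilde K_n$ are $\sigma_y$-reduced.''
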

\begin{proof}
Since $(K,S)$ and $(\tilde K,\tilde S)$ are both RNFs of~$f$,
we have
\[
f=K\frac{\sigma_y(S)}S = \tilde K\frac{\sigma_y(\tilde S)}{\tilde S},
\quad\text{that is,}\quad 
\frac{K}{\tilde K} = \frac{\sigma_y(\tilde S/S)}{\tilde S/S}.
\]
Notice that $K/\tilde K$ admits a $\sigma_y$-splitting factorization
$K/\tilde K = r_s r_n$,
where $r_s = K_s/\tilde K_s, r_n = K_n/\tilde K_n$ and $r_n$ is $\sigma_y$-monic.
The assertions follow by Lemma~\ref{LEM:redprop} and the observation that
both $K_n$ and $\tilde K_n$ are $\sigma_y$-reduced.
\end{proof}
The next two lemmas establish several transformations of RNFs which
preserve the significant denominators of associated $\sigma_y$-remainders.
Their proofs will take use of the following direct consequence
of the polynomial reduction developed in \cite[\S 3.3]{CDGHL2025}.
\begin{fact}\label{FAC:polyred}
Let $K\in \set F(y)$ be $\sigma_y$-standard with denominator~$v$. Then
for every polynomial $a\in \set F[y]$, there exist $g\in \set F(y)$ and
$p\in \im(\phi_K)^\top$ such that $a = \Delta_K(g) + p/v$. 
\end{fact}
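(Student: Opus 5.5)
The plan is to use the direct-sum decomposition $\set F[y] = \im(\phi_K)\oplus\im(\phi_K)^\top$ recalled just before Definition~\ref{DEF:remainder}, applied not to $a$ itself but to the polynomial $av$. Clearing the denominator $v$ is what turns the rational map $\Delta_K$ into the polynomial map $\phi_K$.

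Write $K = u/v$ with $u,v\in\set F[y]$ coprime. First I will record the identity that links $\Delta_K$ and $\phi_K$ on polynomials. For every $w\in\set F[y]$,
\[
\Delta_K(w) = K\sigma_y(w) - w = \frac{u\,\sigma_y(w) - v\,w}{v} = \frac{\phi_K(w)}{v}.
\]
So $\phi_K(w)/v$ always lies in $\im(\Delta_K)$, and a polynomial $w$ can serve directly as $g$.

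Next, given $a\in\set F[y]$, the product $av$ is again a polynomial. By the direct-sum decomposition there are $w\in\set F[y]$ and $p\in\im(\phi_K)^\top$ with $av = \phi_K(w) + p$. Dividing by $v$ and using the identity above gives
\[
a = \frac{\phi_K(w)}{v} + \frac{p}{v} = \Delta_K(w) + \frac{p}{v},
\]
so $g = w$ and this $p$ satisfy the claim. If one also wants $g$ and $p$ explicitly, they come from the polynomial reduction of \cite[\S 3.3]{CDGHL2025}, which reduces $av$ by leading terms.

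There is no real obstacle in this argument. The only point that needs care is the decomposition $\set F[y] = \im(\phi_K)\oplus\im(\phi_K)^\top$ itself, which I take as given from the text. Its justification is that one can pick elements of $\im(\phi_K)$ with pairwise distinct degrees forming an echelon basis, and then reduce any polynomial against that basis by leading terms. This terminates because each step strictly lowers the degree. Monomials whose degree is not attained by any element of $\im(\phi_K)$ cannot be eliminated, and these are exactly the ones spanning $\im(\phi_K)^\top$.
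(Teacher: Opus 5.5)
Your proof is correct and matches what the paper intends: the paper states this fact without proof as a direct consequence of the polynomial reduction of \cite[\S 3.3]{CDGHL2025}. That reduction amounts exactly to splitting $av$ along $\set F[y]=\im(\phi_K)\oplus\im(\phi_K)^\top$ and then using $\Delta_K(w)=\phi_K(w)/v$ for $w\in\set F[y]$. As your argument implicitly shows, the $\sigma_y$-standardness of $K$ plays no role in this existence statement.
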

\begin{lemma}\label{LEM:specialfac}
Assume that $\sigma_y$ is the $q$-shift operator. 
Let $(K,S)$ be an RNF of a nonzero rational function $f\in\set F(y)$ 
with $K$ being $\sigma_y$-standard, and 
let $r$ be a $\sigma_y$-remainder of $S$ with respect to~$K$.
Then for any $m\in \set N$, the pair
\[
(\tilde K,\tilde S) = (q^{-m}K, y^mS)
\]
is an RNF of~$f$ and $\tilde K$ is $\sigma_y$-standard. Moreover, there exists
a $\sigma_y$-remainder of~$\tilde S$ with respect to~$\tilde K$ which has the same 
significant denominator as~$r$.
\end{lemma}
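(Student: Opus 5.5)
The proof splits into two parts: first check the RNF and standardness claims directly, then transport the remainder $r$ from $K$ to $\tilde K$.

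\textbf{RNF and standardness.} Since $\sigma_y(y^m)=q^my^m$, we have
\[
\tilde K\sigma_y(\tilde S)/\tilde S = q^{-m}K\cdot q^m\sigma_y(S)/S = f.
\]
Write $u,v$ for the numerator and denominator of $K$. The numerator and denominator of $\tilde K$ can then be taken as $q^{-m}u$ and $v$, so $\tilde K$ is $\sigma_y$-reduced because $K$ is. For the standard condition we need
\[
q^{-m}u(0)q^\ell - v(0) = u(0)q^{\ell-m}-v(0)\neq 0 \quad\text{for every negative integer }\ell .
\]
Since $\ell-m$ is again negative for $m\in\set N$, this follows from the standardness of $K$. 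Strong coprimality with $K$ depends only on $u$ and $v$ up to constant factors, so it is the same condition for $\tilde K$. The map $\phi_{\tilde K}$ sends $p\mapsto q^{-m}u\sigma_y(p)-vp$, which differs from $\phi_K$; this is why the polynomial part of the remainder will have to be recomputed.

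\textbf{Transporting the remainder.} Write $r=h+p/v$ with $S-r=\Delta_K(g)$. The key identity is
\[
\Delta_{\tilde K}(y^m w) = q^{-m}K\,q^m y^m\sigma_y(w)-y^mw = y^m\Delta_K(w),
\]
so $y^m\im(\Delta_K)\subseteq\im(\Delta_{\tilde K})$. Multiplying $S-r=\Delta_K(g)$ by $y^m$ gives
\[
\tilde S - y^m h - y^m p/v \in \im(\Delta_{\tilde K}).
\]
The term $y^mh$ is not proper in general, so perform polynomial division: $y^mh = h' + a$ with $a\in\set F[y]$ and $h'$ proper, where $h'$ has the same denominator $d$ as $h$ because $\gcd(y^m,d)=1$. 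To see that coprimality, note that $d$ is $\sigma_y$-normal, while $y$ is $\sigma_y$-special and hence, by Lemma~\ref{LEM:special}, not $\sigma_y$-normal. A nonconstant factor of a $\sigma_y$-normal polynomial is again $\sigma_y$-normal by Proposition~\ref{PROP:props}, so $y\nmid d$. Also $h'$ is nonzero exactly when $h$ is nonzero, since $\gcd(y^m,d)=1$.

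The remaining term $a+y^mp/v$ has the form $b/v$ with $b=av+y^mp\in\set F[y]$. Apply Fact~\ref{FAC:polyred} with respect to $\tilde K$, whose denominator is $v$, to the polynomial part. Concretely, write $b=\phi_{\tilde K}(c)+p'$ with $p'\in\im(\phi_{\tilde K})^\top$, using $\set F[y]=\im(\phi_{\tilde K})\oplus\im(\phi_{\tilde K})^\top$. Since $\phi_{\tilde K}(c)/v=\Delta_{\tilde K}(c)$, we get $b/v\equiv p'/v$ modulo $\im(\Delta_{\tilde K})$. Then $\tilde r=h'+p'/v$ is a $\sigma_y$-remainder of $\tilde S$ with respect to $\tilde K$, and its significant denominator is $d$.

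\textbf{Main obstacle.} I expect the most delicate steps to be verifying $\gcd(y^m,d)=1$, so that the significant denominator is unchanged, and checking that the standardness condition survives the factor $q^{-m}$. Both follow from the short arguments above.
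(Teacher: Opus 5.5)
Your proof is correct and follows the same route as the paper. It uses the identity $\Delta_{\tilde K}(y^m w)=y^m\Delta_K(w)$, the observation that $y\nmid d$ because $d$ is $\sigma_y$-normal, and division with remainder followed by polynomial reduction with respect to $\tilde K$; your explicit check of standardness via $\ell-m<0$, and your reduction of the whole numerator $av+y^mp$ via $\set F[y]=\im(\phi_{\tilde K})\oplus\im(\phi_{\tilde K})^\top$, fill in details the paper leaves implicit (the latter is genuinely needed, since $y^mp$ itself need not lie in $\im(\phi_{\tilde K})^\top$).
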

\begin{proof}
Let $m\in \set N$ and set $(\tilde K,\tilde S) = (q^{-m}K, y^mS)$.
Since $K$ is $\sigma_y$-standard, so is~$\tilde K$.
The first assertion then follows by observing that
\[
f = K\frac{\sigma_y(S)}S = q^{-m} K\frac{\sigma_y(y^mS)}{y^mS}
= \tilde K\frac{\sigma_y(\tilde S)}{\tilde S}.
\]
Since $r$ is a $\sigma_y$-remainder of~$S$ with respect to~$K$, we have $S-r \in \im(\Delta_K)$,
and thus $S-r =\Delta_K(g)$ for some $g\in \set F(y)$. It follows that
$
\tilde S - y^m r
= y^m\Delta_K(g) 
= \Delta_{\tilde K}(y^mg)\in \im(\Delta_{\tilde K}).
$
It thus suffices to show that $y^m r$ has a $\sigma_y$-remainder with respect to $\tilde K$ whose
significant denominator is equal to that of~$r$. Write $r$ in the form $r = h+p/v$, where $h\in \set F(y)$ 
is proper whose denominator~$d$ is $\sigma_y$-normal and strongly coprime with~$K$, 
$p\in \im(\phi_K)^\top$,
and $v$ is the denominator of~$K$.
Then 
$
y^m r = y^m h + y^mp/v.
$
Since $d$ is $\sigma_y$-normal, it follows from Lemma~\ref{LEM:special}~(ii) that 
$y\nmid d$ and thus $d$ is again the 
denominator of~$y^mh$.
By division with remainder and a subsequent application of Fact~\ref{FAC:polyred}
on the quotient part, 
we can find $\tilde g\in \set F[y]$, $\tilde h\in \set F(y)$ and $\tilde p\in \im(\phi_{\tilde K})^\top$
such that
\[
y^m r = \Delta_{\tilde K}(\tilde g) + \tilde h+ \frac{\tilde p}{v},
\]
and $\tilde h$ is proper with denominator~$d$. 
Therefore, $\tilde h+\tilde p/v$ is a $\sigma_y$-remainder of~$y^mr$ (and thus $\tilde S$) 
with respect to~$\tilde K$ which has the same significant denominator as~$r$.
\end{proof}
\begin{lemma}\label{LEM:normalfac}
Let $(K,S)$ be an RNF of a nonzero rational function $f\in\set F(y)$ 
with $K$ being $\sigma_y$-standard, and 
let $r$ be a $\sigma_y$-remainder of~$S$ with respect to~$K$. 
Let $a\in \set F[y]$ be a $\sigma_y$-monic irreducible factor of $K$ over~$\set F$,
and write $K = u/v$ with $u,v\in \set F[y]$ and $\gcd(u,v) = 1$.
\begin{itemize}
\item[(i)] If $a\mid v$, then the pair
\[
(\tilde K,\tilde S) = \Big(\frac{u}{\tilde v\sigma_y(a)},aS\Big)
\quad \text{with}\ \tilde v = \frac{v}a\in\set F[y]
\]
is an RNF of~$f$ and $\tilde K$ is $\sigma_y$-standard. Moreover, there exists
a $\sigma_y$-remainder of~$\tilde S$ with respect to $\tilde K$ which has the same 
significant denominator as~$r$.

\item[(ii)] If $a\mid u$, then the pair
\[
(\tilde K,\tilde S) = \Big(\frac{\tilde u\sigma_y^{-1}(a)}{v},\sigma_y^{-1}(a)S\Big)
\quad\text{with}\ \tilde u = \frac{u}{a}\in\set F[y]
\]
is an RNF of~$f$ and $\tilde K$ is $\sigma_y$-standard. Moreover, there exists
a $\sigma_y$-remainder of~$\tilde S$ with respect to~$\tilde K$ which has the same 
significant denominator as~$r$.
\end{itemize}
\end{lemma}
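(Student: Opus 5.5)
The plan is to verify directly the three claims in~(i): the RNF identity, $\sigma_y$-standardness of~$\tilde K$, and the transfer of the remainder. Part~(ii) is the mirror image with $\sigma_y^{-1}(a)$ playing the role of~$a$. The RNF identity is immediate:
\[
\tilde K\frac{\sigma_y(aS)}{aS}=\frac{u}{\tilde v\,\sigma_y(a)}\cdot\frac{\sigma_y(a)\sigma_y(S)}{aS}=\frac{u}{v}\cdot\frac{\sigma_y(S)}{S}=f .
\]
For $\sigma_y$-reducedness, note that $\gcd(u,\tilde v)=1$ and that $u$ is coprime to every $\sigma_y^\ell(v)$. Since $\sigma_y^\ell(\sigma_y(a))$ is a shift of a divisor of~$v$, the numerator $u$ is coprime to every shift of $\tilde v\sigma_y(a)$; in particular the fraction is already reduced. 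In the $q$-shift case, $a$ is $q$-monic, so $a(0)=\sigma_y(a)(0)=1$. Hence the constant term of the new denominator is $\tilde v(0)=v(0)$, the numerator is unchanged, and the extra condition in Definition~\ref{DEF:standard} is inherited from~$K$.

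For the remainder, write $S-r=\Delta_K(g)$ and $r=h+p/v$ as in Definition~\ref{DEF:remainder}, with $d$ the denominator of~$h$. A direct computation gives $\Delta_{\tilde K}(ag)=\frac{u}{\tilde v}\sigma_y(g)-ag=a\Delta_K(g)$, so $\tilde S-ar\in\im(\Delta_{\tilde K})$. It therefore suffices to produce a $\sigma_y$-remainder of $ar=ah+p/\tilde v$ with respect to~$\tilde K$ whose significant denominator is~$d$. Strong coprimality of $d$ with~$K$, at $\ell=0$, gives $\gcd(v,d)=1$ and hence $\gcd(a,d)=1$. Division with remainder then writes $ah=w+h'$ with $w\in\set F[y]$ and $h'$ proper with denominator exactly~$d$. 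The leftover $w+p/\tilde v$ can be written as $b/\tilde v_1$ with $\tilde v_1=\tilde v\sigma_y(a)$ the denominator of~$\tilde K$ and $b=w\tilde v_1+p\,\sigma_y(a)\in\set F[y]$.

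Next use $\set F[y]=\im(\phi_{\tilde K})\oplus\im(\phi_{\tilde K})^\top$ to split $b=\phi_{\tilde K}(c)+p'$. Since $\phi_{\tilde K}(c)/\tilde v_1=\Delta_{\tilde K}(c)$, we obtain $ar\equiv h'+p'/\tilde v_1$ modulo $\im(\Delta_{\tilde K})$. This step is just the mechanism behind Fact~\ref{FAC:polyred}.

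The step needing most care is checking that $d$ is strongly coprime with~$\tilde K$, so that $h'+p'/\tilde v_1$ is genuinely a $\sigma_y$-remainder. The numerator condition is unchanged since the numerator is still~$u$. For the denominator we need $\gcd(\tilde v\sigma_y(a),\sigma_y^{-\ell}(d))=1$ for all $\ell\ge0$. The factor $\tilde v$ divides~$v$, so it is covered by the old condition. For the factor $\sigma_y(a)$ the condition is equivalent to $\gcd(a,\sigma_y^{-(\ell+1)}(d))=1$, which holds because $a\mid v$ and $\ell+1\ge0$. This is exactly why the shift goes in the direction $\sigma_y(a)$.

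For~(ii), the same computation gives $\Delta_{\tilde K}(\sigma_y^{-1}(a)g)=\sigma_y^{-1}(a)\Delta_K(g)$. We have $\gcd(\sigma_y^{-1}(a),d)=1$ because $\gcd(u,\sigma_y(d))=1$, and the new numerator condition $\gcd(\sigma_y^{-1}(a),\sigma_y^\ell(d))=\gcd(a,\sigma_y^{\ell+1}(d))=1$ follows from strong coprimality with~$u$. Reducedness holds because every shift of~$a$ divides a shift of~$u$, which is coprime to all shifts of~$v$. In the $q$-case $\sigma_y^{-1}(a)(0)=1$, so the constant term of the numerator, and with it $\sigma_y$-standardness, is again preserved.
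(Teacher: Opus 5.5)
Your proof is correct and follows the same route as the paper: you verify the RNF identity and $\sigma_y$-standardness directly, use $\tilde S-ar=\Delta_{\tilde K}(ag)$, note $\gcd(a,d)=1$, and then apply division with remainder followed by polynomial reduction with respect to $\tilde K$. You also supply two details the paper leaves implicit: that $d$ stays strongly coprime with $\tilde K$ (which is exactly where the shift direction $\sigma_y(a)$, resp.\ $\sigma_y^{-1}(a)$, matters), and that the term $p\,\sigma_y(a)/(\tilde v\sigma_y(a))$ must be folded into the polynomial reduction so that the final numerator lies in $\im(\phi_{\tilde K})^\top$.
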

\begin{proof}
(i) Assume that $a\mid v$ and let $\tilde v = v/a\in\set F[y]$.
Set $(\tilde K,\tilde S) = (u/(\tilde v\sigma_y(a)),aS)$.
Since $K$ is $\sigma_y$-standard and $a$ is a $\sigma_y$-monic factor of~$v$, we see
that $\tilde K$ is also $\sigma_y$-standard. 
The first assertion thus follows by observing that
\[
f = K\frac{\sigma_y(S)}S = \frac{u}{\tilde v a}\frac{\sigma_y(S)}S 
= \frac{u}{\tilde v\sigma_y(a)}\frac{\sigma_y(aS)}{aS}
= \tilde K\frac{\sigma_y(\tilde S)}{\tilde S}.
\]
Since $r$ is a $\sigma_y$-remainder of~$S$ with respect to~$K$, we have $S-r \in \im(\Delta_K)$,
and thus $S-r = \Delta_K(g)$ for some $g\in \set F(y)$. It follows that
$
\tilde S - ar 
=  a\Delta_K(g) 
= \Delta_{\tilde K}(a g)\in \im(\Delta_{\tilde K}).
$
It thus amounts to showing that $ar$ has a $\sigma_y$-remainder with respect to~$\tilde K$ whose
significant denominator is equal to that of~$r$. 
Write $r$ in the form $r = h+p/v$, where $h\in \set F(y)$ is proper whose denominator~$d$ is 
$\sigma_y$-normal and strongly coprime with~$K$ and $p\in \im(\phi_K)^\top$. Then 
$
ar 
= a h + p\,\sigma_y(a)/(\tilde v\sigma_y(a)).
$
Since $d$ is strongly coprime with $K$ and $a\mid v$,
it follows that $\gcd(a,d) = 1$ and thus $d$ is again the denominator of~$ah$.
By division with remainder and a subsequent application of Fact~\ref{FAC:polyred}
on the quotient part, 
we can find $\tilde g\in \set F[y]$, $\tilde h\in \set F(y)$ and $\tilde p\in \im(\phi_{\tilde K})^\top$ 
such that
\[
a r = \Delta_{\tilde K}(\tilde g) + \tilde h+ \frac{\tilde p}{\tilde v\sigma_y(a)},
\]
and $\tilde h$ is proper with denominator~$d$. 
Therefore, $\tilde h+\tilde p/(\tilde v\sigma_y(a))$ is a $\sigma_y$-remainder 
of~$ar$ (and thus $\tilde S$) with respect to $\tilde K$ which has the same 
significant denominator as~$r$.

\medskip\noindent
(ii) The assertions can be shown by following similar lines as part (i).
\end{proof}

Now we are ready to extend Proposition~\ref{PROP:onekernel} to the case of different kernels.
\begin{proposition}\label{PROP:twokernels}
Let $(K,S)$ and $(\tilde K,\tilde S)$ be two RNFs of a nonzero rational function $f\in\set F(y)$
with $K,\tilde K$ being $\sigma_y$-standard.
Let $r$ be a $\sigma_y$-remainder of~$S$ with respect to~$K$ and 
$\tilde r$ be a $\sigma_y$-remainder of $\tilde S$ with respect to~$\tilde K$. 
Then the significant denominators of~$r$ and $\tilde r$ are $\sigma_y$-related.
\end{proposition}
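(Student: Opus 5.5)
The strategy is to transform $(K,S)$ step by step into an RNF whose kernel is exactly $\tilde K$. Each step is one of the elementary moves of Lemmas~\ref{LEM:specialfac} and~\ref{LEM:normalfac}, and each preserves (up to the choice of remainder) the significant denominator. Once the kernel equals $\tilde K$, Proposition~\ref{PROP:onekernel} applies. Since $\sigma_y$-relatedness is an equivalence relation, and two remainders of the same shell with respect to the same kernel have $\sigma_y$-related significant denominators (the case $\tilde S = S$ of Proposition~\ref{PROP:onekernel}), it is harmless that each lemma only produces \emph{some} remainder with the same significant denominator.

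First normalize the kernels. Write $K=K_sK_n$ and $\tilde K=\tilde K_s\tilde K_n$ as in Lemma~\ref{LEM:twornfs}, with $K_n,\tilde K_n$ $\sigma_y$-monic. In the $q$-shift case, $K_s/\tilde K_s=q^{e}$ for some $e\in\set Z$. By symmetry of the statement (swap the roles of the two RNFs), we may assume $e\ge 0$. Lemma~\ref{LEM:specialfac} with $m=e$ then replaces $(K,S)$ by $(q^{-e}K,y^eS)$, whose special part is $\tilde K_s$; it keeps the kernel standard and the significant denominator unchanged. In the usual shift case nothing needs to be done.

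Next match the normal parts. By Lemma~\ref{LEM:twornfs}(ii),(iii), the monic irreducible factors of the denominator (resp.\ numerator) of $K_n$ and $\tilde K_n$ correspond bijectively up to $\sigma_y$-equivalence, i.e.\ $\varphi(a)=\sigma_y^{\ell}(a)$ for some $\ell\in\set Z$. The plan is to move factors one at a time. Lemma~\ref{LEM:normalfac}(i) replaces a denominator factor $a$ by $\sigma_y(a)$, and Lemma~\ref{LEM:normalfac}(ii) replaces a numerator factor $b$ by $\sigma_y^{-1}(b)$. Each application keeps the kernel $\sigma_y$-standard and the significant denominator fixed. Iterating, we can shift denominator factors forward and numerator factors backward as far as needed.

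To get shifts in the opposite directions, I would use the symmetry again: apply the moves to $(\tilde K,\tilde S)$ instead. Concretely, for each matched pair $a\leftrightarrow\sigma_y^\ell(a)$ in the denominators, shift $a$ in $K$ if $\ell>0$ and shift the partner in $\tilde K$ if $\ell<0$. Numerator pairs are handled analogously. Both sides are thus driven to a common kernel $\hat K$. All moves are legitimate since every intermediate kernel stays $\sigma_y$-standard by the lemmas, and the special-part adjustment likewise works on whichever side has the smaller power of $q$. We end with two RNFs $(\hat K,\hat S)$ and $(\hat K,\hat S')$ whose remainders have the same significant denominators as $r$ and $\tilde r$ respectively. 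Proposition~\ref{PROP:onekernel} then gives $\sigma_y$-relatedness, and transitivity finishes the proof.

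The main obstacle is the bookkeeping that the moves really reach a common kernel. Shifting a denominator factor must not create a cancellation with a numerator factor, and the order of moves must keep every intermediate kernel reduced and standard. I expect the standardness claims inside Lemma~\ref{LEM:normalfac} to cover this. If they need extra hypotheses, I would first perform all denominator moves on the side where they are forward shifts, then the numerator moves, arguing $\sigma_y$-reducedness from that of $K$ and $\tilde K$ together with the fact that the matched factors lie in the same $\sigma_y$-orbit.
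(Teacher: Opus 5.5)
Your proposal is correct and follows essentially the same route as the paper. The paper first equalizes the special parts via Lemma~\ref{LEM:specialfac}. It then shifts each matched pair of normal factors on whichever side the lemmas permit: denominator factors forward by Lemma~\ref{LEM:normalfac}(i), numerator factors backward by Lemma~\ref{LEM:normalfac}(ii). It concludes with Proposition~\ref{PROP:onekernel}.

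One small wording slip: in your last paragraph, the special-part adjustment acts on the side with the \emph{larger} power of $q$, not the smaller one. This is because Lemma~\ref{LEM:specialfac} only divides the kernel by $q^m$ with $m\in\set N$. Your second paragraph already applies it correctly in this way.
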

\begin{proof}
Let $d$ and $\tilde d$ be the significant denominators of~$r$ and~$\tilde r$, respectively.
Let
\[
K = K_sK_n\quad\text{and}\quad \tilde K = \tilde K_s\tilde K_n
\]
be $\sigma_y$-splitting factorizations of~$K$ and $\tilde K$, respectively,
with $K_n,\tilde K_n$ being $\sigma_y$-monic.
Since $(K,S)$ and $(\tilde K,\tilde S)$ are both RNFs of~$f$, by Lemma~\ref{LEM:twornfs}~(i),
\[
\frac{\tilde K_s}{K_s} = 
\begin{cases}
1 & \text{in the usual shift case},\\[1ex]
q^m \ \text{for some}\ m\in \set Z & \text{in the $q$-shift case}.
\end{cases}
\]
Note that in the $q$-shift case, we may assume without loss of generality that $m\geq 0$, for, otherwise
the roles of~$K$ and $\tilde K$ can be switched. With the help of Lemma~\ref{LEM:specialfac},
we find that, in either case, there exists an RNF $(\bar K,\bar S)$ of~$f$ such that
the kernel $\bar K$ is $\sigma_y$-standard and admits a $\sigma_y$-splitting factorization 
\[
\bar K = \bar K_s \bar K_n, \quad\text{where $\bar K_s = K_s$ and $\bar K_n = \tilde K_n$},
\]
and the shell $\bar S$ has a $\sigma_y$-remainder with respect to~$\bar K$ of significant 
denominator~$\tilde d$.

By Lemma~\ref{LEM:twornfs}~(ii), there exists a one-to-one correspondence $\varphi$ between 
the multi-sets of $\sigma_y$-monic irreducible factors of the denominators of~$K_n$ and $\bar K_n$ over~$\set F$
such that $a$ and $\varphi(a)$ are $\sigma_y$-equivalent for all $a\in\set F[y]$ dividing the denominator of~$K_n$.
If $K_n\in \set F[y]$, then so is~$\bar K_n$, implying that $K$ and $\bar K$ have 
the same denominator as $K_n$ and $\bar K_n$ are both $\sigma_y$-monic. 
Now assume that $K_n\notin \set F[y]$ and 
let $a\in\set F[y]$ be a $\sigma_y$-monic irreducible factor of the denominator of~$K_n$ over~$\set F$.
Then $\varphi(a) = \sigma_y^\ell(a)$ for some $\ell\in \set Z$. 
If $\ell <0$, then a repeated use of Lemma~\ref{LEM:normalfac}~(i) on $(\bar K,\bar S)$
leads to a new RNF $(K',S')$ of $f$ such that $K' = \bar K \sigma_y^\ell(a)/a$ is 
$\sigma_y$-standard and $S'$ has a $\sigma_y$-remainder with respect to~$K'$
of significant denominator~$\tilde d$. 
Analogously, in the case where~$\ell>0$, we can apply Lemma~\ref{LEM:normalfac} (i) to $(K,S)$
repeatedly
and obtain a new RNF $(K',S')$ satisfying that $K'=Ka/\sigma_y^\ell(a)$ is $\sigma_y$-standard
and $S'$ has a 
$\sigma_y$-remainder with respect to~$K'$ of significant denominator~$d$.
Note that, compared with the original kernels $K$ and~$\bar K$, the new kernel $K'$ and 
the remaining one have one more $\sigma_y$-monic irreducible factor over~$\set F$
(counting the multiplicities) in common.

Applying the above argument to each $\sigma_y$-monic irreducible factor of the denominator of~$K_n$ over~$\set F$
and using Lemma~\ref{LEM:normalfac} (ii) for the numerator in the same fashion, 
we finally obtain two
new RNFs of $f$ where kernels are equal and $\sigma_y$-standard, and shells have respective 
$\sigma_y$-remainders of significant denominators $d$ and~$\tilde d$. 
It follows from Proposition~\ref{PROP:onekernel} that $d$ and $\tilde d$ are $\sigma_y$-related.
\end{proof}

It can be shown that the application of the operator~$\sigma_x$ preserves 
RNFs with $\sigma_y$-standard kernels, as well as associated 
$\sigma_y$-remainders of the shells.
\begin{lemma}\label{LEM:xshiftrnf}
Let $(K,S)$ be an RNF of a nonzero rational function $f\in\set F(y)$ 
with $K$ being $\sigma_y$-standard, and
let $r$ be a $\sigma_y$-remainder of~$S$ with respect to~$K$ with significant denominator~$d$. 
Then for every $i\in \set N$,
the pair $(\sigma_x^i(K),\sigma_x^i(S))$ is an RNF of~$\sigma_x^i(f)$, 
$\sigma_x^i(K)$ is $\sigma_y$-standard, and
$\sigma_x^i(r)$ is a $\sigma_y$-remainder of $\sigma_x^i(S)$ with respect to~$\sigma_x^i(K)$
with significant denominator~$\sigma_x^i(d)$.
\end{lemma}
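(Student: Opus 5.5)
The approach is to use that $\sigma_x^i$ is a field automorphism of $\set K(x,y)$ that restricts to an automorphism of $\set F=\set K(x)$ and commutes with $\sigma_y$. (In the usual shift case $x\mapsto x+i$; in the $q$-shift case $x\mapsto q^ix$. In both cases it fixes $y$ and $q$.) Every property in the statement is then transported along this automorphism, so we write $\tau=\sigma_x^i$ for fixed $i\in\set N$. Applying $\tau$ coefficientwise gives a ring automorphism of $\set F[y]$ that preserves $y$-degrees, gcds (up to units of $\set F$), irreducibility over $\set F$ and multiplicities of factors. It also satisfies $\tau\sigma_y^\ell=\sigma_y^\ell\tau$ for all $\ell\in\set Z$.

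First, the RNF claim. Applying $\tau$ to $f=K\sigma_y(S)/S$ gives $\tau(f)=\tau(K)\sigma_y(\tau(S))/\tau(S)$. If $K=u/v$ with $\gcd(u,\sigma_y^\ell(v))=1$ for all $\ell$, then $\tau(K)=\tau(u)/\tau(v)$ with $\gcd(\tau(u),\sigma_y^\ell(\tau(v)))=\tau(\gcd(u,\sigma_y^\ell(v)))=1$, so $\tau(K)$ is $\sigma_y$-reduced. In the $q$-shift case, evaluation at $y=0$ commutes with $\tau$. Hence $\tau(u)(0)q^\ell-\tau(v)(0)=\tau\big(u(0)q^\ell-v(0)\big)\neq0$ for negative $\ell$, and $\tau(K)$ is $\sigma_y$-standard.

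Next, the remainder claim. From $S-r=\Delta_K(g)=K\sigma_y(g)-g$ we get $\tau(S)-\tau(r)=\Delta_{\tau(K)}(\tau(g))$. Write $r=h+p/v$. Then $\tau(r)=\tau(h)+\tau(p)/\tau(v)$. Here $\tau(h)$ is proper with denominator $\tau(d)$, and the same gcd transport shows that $\tau(d)$ is $\sigma_y$-normal and strongly coprime with $\tau(K)$.

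The only step needing care, and the main obstacle, is showing $\tau(p)\in\im(\phi_{\tau(K)})^\top$. We will check that $\tau\circ\phi_K=\phi_{\tau(K)}\circ\tau$ on $\set F[y]$, since $\tau(u\sigma_y(p)-vp)=\tau(u)\sigma_y(\tau(p))-\tau(v)\tau(p)$. Because $\tau$ is bijective on $\set F[y]$, this gives $\im(\phi_{\tau(K)})=\tau(\im(\phi_K))$. As $\tau$ preserves $y$-degrees, the sets of degrees attained by the two images coincide, so the two standard complements are spanned by the same monomials $y^d$. Since $\tau(y^d)=y^d$ and $\tau$ is $\tau$-semilinear over $\set F$, it maps $\im(\phi_K)^\top$ onto $\im(\phi_{\tau(K)})^\top$. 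Therefore $\tau(r)$ is a $\sigma_y$-remainder of $\tau(S)$ with respect to $\tau(K)$, with significant denominator $\tau(d)$.
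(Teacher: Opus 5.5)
Your proposal is correct and follows essentially the same route as the paper. Both transport every property along the automorphism $\sigma_x^i$, using that it commutes with $\sigma_y$, preserves gcds, $y$-degrees and evaluation at $y=0$, and intertwines $\phi_K$ with $\phi_{\sigma_x^i(K)}$. Your degree-set argument for $\sigma_x^i(p)\in\im(\phi_{\sigma_x^i(K)})^\top$ is just a slightly more direct phrasing of the paper's echelon-basis argument.
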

\begin{proof}
Let $i\in \set N$. 
Since $(K,S)$ is an RNF of~$f$, we have $f = K\sigma_y(S)/S$ and thus
\[
\sigma_x^i(f) = \sigma_x^i(K)\sigma_x^i\Big(\frac{\sigma_y(S)}S\Big) 
= \sigma_x^i(K) \frac{\sigma_y(\sigma_x^i(S))}{\sigma_x^i(S)}.
\]
Notice that for any $a,b\in \set F[y]$, we have $\gcd(a,b) = 1$ if and only if 
$\gcd(\sigma_x^i(a),\sigma_x^i(b)) =1$. Also notice that in the $q$-shift case,
$\sigma_x^i(a)\mid_{y=0} = \sigma_x^i(a\mid_{y=0})$ for any $a\in \set F[y]$. 
We thus obtain from the $\sigma_y$-standardness
of~$K$ that $\sigma_x^i(K)$ is $\sigma_y$-standard, and then 
$(\sigma_x^i(K),\sigma_x^i(S))$ is an RNF of~$\sigma_x^i(f)$.

Since $r$ is a $\sigma_y$-remainder of~$S$ with respect to~$K$, we have $S-r \in \im(\Delta_K)$,
and thus $S-r =\Delta_K(g)$ for some $g\in \set F(y)$. It follows that
$
\sigma_x^i(S) - \sigma_x^i(r) 
=\sigma_x^i(\Delta_K(g)) 
= \Delta_{\sigma_x^i(K)}(\sigma_x^i(g))\in \im(\Delta_{\sigma_x^i(K)}).
$
It remains to show that $\sigma_x^i(r)$ is a $\sigma_y$-remainder with respect to~$\sigma_x^i(K)$
and has the significant denominator~$\sigma_x^i(d)$.
Write $r = h+p/v$, where $h\in \set F(y)$ is proper whose denominator~$d$
is $\sigma_y$-normal and strongly coprime with~$K$, $p\in \im(\phi_K)^\top$, and $v$ 
is the denominator of~$K$. Then $\sigma_x^i(r) = \sigma_x^i(h) + \sigma_x^i(p)/\sigma_x^i(v)$.
Clearly, $\sigma_x^i(h)\in \set F(y)$ is again proper whose denominator 
$\sigma_x^i(d)$ is $\sigma_y$-normal and strongly coprime with~$\sigma_x^i(K)$. 
In order to show that $\sigma_x^i(p)\in \im(\phi_{\sigma_x^i(K)})^\top$, we observe that 
$a\in \im(\phi_K)$ if and only if $\sigma_x^i(a)\in\im(\phi_{\sigma_x^i(K)})$ for any 
$a\in \set F[y]$. 
Since $\sigma_x^i\circ\deg_y=\deg_y\circ\, \sigma_x^i$, the images of all
elements from an echelon basis (namely an $\set F$-basis in which different
elements have distinct $y$-degrees) for $\im(\phi_K)$ under the automorphism~$\sigma_x^i$
form an echelon basis for $\im(\phi_{\sigma_x^i(K)})$. This implies that $\im(\phi_K)^\top$ 
and $\im(\phi_{\sigma_x^i(K)})^\top$ share the same echelon basis.
It thus follows from $p\in \im(\phi_K)^\top$ that $\sigma_x^i(p)\in \im(\phi_{\sigma_x^i(K)})^\top$. 
Accordingly, $\sigma_x^i(r)$ is a $\sigma_y$-remainder of~$\sigma_x^i(S)$ with
respect to~$\sigma_x^i(K)$ with significant denominator~$\sigma_x^i(d)$.
\end{proof}
The desired relation is given below.
\begin{proposition}\label{PROP:shiftrem}
Let $T$ be a $(\sigma_x,\sigma_y)$-hypergeometric term whose $\sigma_y$-quotient
has a $\sigma_y$-standard kernel $K$ and a corresponding shell~$S$, and set $H = T/S$. 
Then for every $i\in \set N$, there exists a rational function $g_i\in \set F(y)$
and a $\sigma_y$-remainder $r_i$ with respect to~$K$ such that
\begin{equation}\label{EQ:ithhyperred}
\sigma_x^i(T) = \Delta_y(g_iH) + r_i H.
\end{equation}
Furthermore, let $r$ be a $\sigma_y$-remainder of~$S$ with respect to~$K$ and 
$d$ be its significant denominator. Then for every $i\in \set N$, the significant 
denominator of~$r_i$ in \eqref{EQ:ithhyperred} is $\sigma_y$-related to~$\sigma_x^i(d)$.
\end{proposition}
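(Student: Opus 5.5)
The plan is to relate $\sigma_x^i(T)$ to two different RNFs of the same rational function, namely the $\sigma_y$-quotient of $\sigma_x^i(T)$, and then invoke Proposition~\ref{PROP:twokernels}.

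Let $f=\sigma_y(T)/T\in\set F(y)$ be the $\sigma_y$-quotient of $T$, so that $f=K\sigma_y(S)/S$. Since $T$ is also $\sigma_x$-hypergeometric, $\sigma_x^i(T)=w_iT$ for some nonzero $w_i\in\set F(y)$ (a product of shifts of the $\sigma_x$-quotient). Hence $\sigma_x^i(T)=w_iS\,H$, and
\[
\frac{\sigma_y(\sigma_x^i(T))}{\sigma_x^i(T)}=\sigma_x^i(f)=K\,\frac{\sigma_y(w_iS)}{w_iS}.
\]
Thus $(K,w_iS)$ is an RNF of $\sigma_x^i(f)$ with $\sigma_y$-standard kernel $K$, and $\sigma_x^i(T)$ is a $\sigma_y$-hypergeometric term with this kernel and shell $w_iS$, and with $\sigma_x^i(T)/(w_iS)=H$. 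Applying Theorem~\ref{THM:shellred} to $\sigma_x^i(T)$ with this RNF gives $g_i$ and a $\sigma_y$-remainder $r_i$ of $w_iS$ with respect to $K$ satisfying \eqref{EQ:ithhyperred}. To justify this, I will check that the algorithm behind Theorem~\ref{THM:shellred} accepts any $\sigma_y$-standard kernel with its corresponding shell, not just a canonical one. Alternatively, one can argue directly: reduce $w_iS$ modulo $\im(\Delta_K)$ to a $\sigma_y$-remainder $r_i$. Since $\Delta_y(gH)=(K\sigma_y(g)-g)H=\Delta_K(g)H$, this yields \eqref{EQ:ithhyperred}.

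For the second claim, Lemma~\ref{LEM:xshiftrnf} shows that $(\sigma_x^i(K),\sigma_x^i(S))$ is also an RNF of $\sigma_x^i(f)$, with $\sigma_y$-standard kernel, and that $\sigma_x^i(r)$ is a $\sigma_y$-remainder of $\sigma_x^i(S)$ with respect to $\sigma_x^i(K)$, with significant denominator $\sigma_x^i(d)$. We now have two RNFs, $(K,w_iS)$ and $(\sigma_x^i(K),\sigma_x^i(S))$, of the same nonzero rational function $\sigma_x^i(f)$, both with $\sigma_y$-standard kernels, together with $\sigma_y$-remainders $r_i$ and $\sigma_x^i(r)$ of their respective shells. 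Proposition~\ref{PROP:twokernels} then says that the significant denominator of $r_i$ is $\sigma_y$-related to $\sigma_x^i(d)$.

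I expect the main obstacle to be the bookkeeping in the first step. One must make sure that $r_i$ is a remainder of the shell $w_iS$ itself, not of some other shell. One must also check that the way Theorem~\ref{THM:shellred} produces $r_i$ with the fixed kernel $K$ is compatible with Definition~\ref{DEF:remainder}, so that Proposition~\ref{PROP:twokernels} applies. This follows from the identity $\sigma_x^i(T)=w_iS\cdot H$ together with $\Delta_y(gH)=\Delta_K(g)H$. The relatedness argument itself is then immediate.
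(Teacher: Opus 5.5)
Your proposal is correct and follows essentially the same route as the paper's proof. Your shell $w_iS$ with $w_i=\sigma_x^i(T)/T$ is exactly the paper's $\tilde S=\sigma_x^i(S)\sigma_x^{i-1}(N)\cdots\sigma_x(N)N$ with $N=\sigma_x(H)/H$. The paper likewise applies Theorem~\ref{THM:shellred} with the RNF $(K,\tilde S)$, cancels $H$ to see that $r_i$ is a $\sigma_y$-remainder of $\tilde S$, and concludes via Lemma~\ref{LEM:xshiftrnf} and Proposition~\ref{PROP:twokernels}.
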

\begin{proof}
Let $i\in \set N$, $f = \sigma_y(T)/T$ and $N = \sigma_x(H)/H$. Then $f,N\in \set F(y)$.
Since $T = SH$, we have $K = \sigma_y(H)/H$ and
\[
\sigma_x^i(T) = \sigma_x^i(SH) 
= \sigma_x^i(S) \sigma_x^{i-1}(N)\cdots \sigma_x(N)N H
=\tilde S H,
\]
where $\tilde S = \sigma_x^i(S) \sigma_x^{i-1}(N)\cdots \sigma_x(N)N$.
It follows that
\[
\sigma_x^i(f) = \sigma_x^i\Big(\frac{\sigma_y(T)}T\Big)
= \frac{\sigma_y(\sigma_x^i(T))}{\sigma_x^i(T)}
= \frac{\sigma_y(\tilde SH)}{\tilde SH}
= K\frac{\sigma_y(\tilde S)}{\tilde S}.
\]
Notice that $K$ is $\sigma_y$-standard. So $(K,\tilde S)$ is an RNF of~$\sigma_x^i(f)$.
This implies that $\sigma_x^i(T)$ can be viewed as a 
$(\sigma_x,\sigma_y)$-hypergeometric term whose $\sigma_y$-quotient 
has a $\sigma_y$-standard kernel $K$ and a corresponding shell~$\tilde S$.
Applying Theorem~\ref{THM:shellred} to $\sigma_x^i(T)$ gives 
\eqref{EQ:ithhyperred}, where $g_i\in \set F(y)$ and $r_i$ is a 
$\sigma_y$-remainder with respect to~$K$.
Thus the first assertion follows.

For proving the second assertion, we note that $(K,S)$ is an RNF of~$f$ with
$K$ being $\sigma_y$-standard. Since $r$ is a $\sigma_y$-remainder of~$S$
with respect to~$K$ whose significant denominator is~$d$, by 
Lemma~\ref{LEM:xshiftrnf}, the pair $(\sigma_x^i(K),\sigma_x^i(S))$ is 
an RNF of~$\sigma_x^i(f)$, $\sigma_x^i(K)$ is $\sigma_y$-standard, and 
$\sigma_x^i(r)$ is a $\sigma_y$-remainder of $\sigma_x^i(S)$ with respect 
to $\sigma_x^i(K)$ whose significant denominator is~$\sigma_x^i(d)$.
Observe that $\sigma_x^i(T) = \tilde S H$ and $K = \sigma_y(H)/H$. So
$\tilde S = \Delta_K(g_i) + r_i$ by \eqref{EQ:ithhyperred},
implying that $r_i$ is a $\sigma_y$-remainder of~$\tilde S$ with respect to~$K$. 
Since $(K,\tilde S)$ is also an RNF of $\sigma_x^i(f)$, we conclude from
Proposition~\ref{PROP:twokernels} that the significant denominator of~$r_i$ is 
$\sigma_y$-related to~$\sigma_x^i(d)$.
\end{proof}

\section{Upper and lower order bounds}\label{SEC:ord}
With the notation introduced in Proposition~\ref{PROP:shiftrem}, by knowing the connection 
between each~$r_i$ in~\eqref{EQ:ithhyperred} and the initial $\sigma_y$-remainder~$r$,
we proceed to arguing in this section that these $r_i$ can be chosen in such a way 
that all their significant denominators divide a pre-specified a priori multiple
provided that the significant denominator of~$r$ is integer-linear, 
leading to upper and lower bounds on the order of telescopers.

We first recall the notion of integer-linear rational functions. 
\begin{definition}\label{DEF:intlinear}
An irreducible polynomial $p$ in~$\set K[x,y]$ is said to be {\em integer-linear} 
(over~$\set K$) if there exist $m,n\in \set Z$, not both zero, such that $p$ and 
$\sigma_x^m\sigma_y^n(p)$ are associates.
A polynomial in~$\set K[x,y]$ is said to be {\em integer-linear} (over~$\set K$) 
if all its irreducible factors are integer-linear. 
A rational function in~$\set F(y)$ is said to be {\em integer-linear} 
(over~$\set K$) if its denominator and numerator are both integer-linear.
\end{definition}
We refer to \cite{GHLZ2019, GHLZ2021} for algorithms determining the 
integer-linearity of rational functions.  The ``integer-linear" attribute in the 
name is justified by the following proposition. 
\begin{proposition}[{\cite[Proposition~5.4]{CDGHL2025}}]\label{PROP:intlinear}
Let $p$ be an irreducible polynomial in~$\set K[x,y]$.
\begin{itemize}
\item[(i)] In the usual shift case, $p$ is integer-linear if and only if
$p(x,y) = P(\lambda x+ \mu y)$ for some $P(z)\in \set K[z]$ and 
$\lambda,\mu\in \set Z$ not both zero.

\item[(ii)] In the $q$-shift case, $p$ is integer-linear if and only if
$p(x,y) = x^\alpha y^\beta P(x^\lambda y^\mu)$ for some 
$P(z)\in \set K[z]$ and $\alpha,\beta,\lambda,\mu\in \set Z$ 
with $\lambda, \mu$ not both zero.
\end{itemize}
\end{proposition}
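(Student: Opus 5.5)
Since $p$ is only required to be an associate of one of its shifts, I would compare $p$ and $\sigma_x^m\sigma_y^n(p)$ coefficient by coefficient: leading homogeneous parts in the usual shift case, individual monomials in the $q$-shift case. In both cases the converse direction is a direct verification. Irreducibility of $p$ plays no real role; it is only part of the definition.

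\emph{Usual shift case.} First I would show the ``if'' direction. If $p=P(\lambda x+\mu y)$, take $(m,n)=(\mu,-\lambda)$, which is not $(0,0)$. Then $\lambda(x+m)+\mu(y+n)=\lambda x+\mu y$, so $\sigma_x^m\sigma_y^n(p)=p$.

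For the ``only if'' direction, assume $p(x+m,y+n)=c\,p(x,y)$ with $c\in\set K$ and $(m,n)\neq(0,0)$.
\begin{itemize}
\item Comparing the homogeneous components of top total degree gives $c=1$, since a translation leaves that component unchanged.
\item Iterating, $p(x+tm,y+tn)=p(x,y)$ for all $t\in\set N$.
\item Hence the polynomial $p(x+tm,y+tn)-p(x,y)\in\set K[x,y][t]$ has infinitely many roots and vanishes identically.
\item Differentiating in $t$ at $t=0$ gives $m\,\partial_xp+n\,\partial_yp=0$.
\item Change coordinates linearly so that $z=nx-my$ is one coordinate and the direction $(m,n)$ is the other. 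In characteristic zero the equation says $p$ is constant along $(m,n)$, so $p=P(nx-my)$.
\item Take $\lambda=n$ and $\mu=-m$, which are integers and not both zero.
\end{itemize}

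\emph{$q$-shift case.} Write $p=\sum c_{ij}x^iy^j$ and assume $p(q^mx,q^ny)=c\,p$.
\begin{itemize}
\item Each monomial $x^iy^j$ is multiplied by $q^{mi+nj}$, so every $(i,j)$ in the support satisfies $q^{mi+nj}=c$.
\item Since $q$ is not a root of unity, $mi+nj$ takes the same value on the whole support. So the support lies on a single line $mi+nj=k$.
\item Let $g=\gcd(m,n)$ and $(\lambda,\mu)=(n/g,-m/g)$. The lattice points on this line are $(i_0,j_0)+t(\lambda,\mu)$ with $t\in\set Z$.
\item Choose $(i_0,j_0)$ to be the support point with the smallest $t$. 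Then $p=x^{i_0}y^{j_0}P(x^\lambda y^\mu)$, where $P\in\set K[z]$ has the coefficients $c_{ij}$ in the appropriate degrees.
\item Take $\alpha=i_0$ and $\beta=j_0$. Negative $\lambda$ or $\mu$ are allowed, and this is why the statement allows integer exponents.
\end{itemize}
For the converse, given $p=x^\alpha y^\beta P(x^\lambda y^\mu)$, take $(m,n)=(\mu,-\lambda)$. Every monomial $x^{\alpha+t\lambda}y^{\beta+t\mu}$ is then scaled by the same factor $q^{\mu\alpha-\lambda\beta}$, so $p$ is an associate of $\sigma_x^m\sigma_y^n(p)$.

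\emph{Main obstacle.} The only delicate step is, in the usual shift case, passing from invariance under the discrete translation to $p$ being a polynomial in one linear form. I would handle it as above: first turn the discrete identity into an identity that is polynomial in $t$, then use the derivative in $t$ together with the linear change of variables.
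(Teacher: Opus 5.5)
Your proof is correct and complete. In the usual shift case, each step is valid in characteristic zero: comparing top-degree homogeneous parts forces $c=1$, the identity for infinitely many $t\in\mathbb N$ then holds as a polynomial identity in $t$, and the directional-derivative argument finishes it. In the $q$-shift case, comparing monomials and using that $q$ is not a root of unity puts the support on a single lattice line, which gives exactly the form $x^\alpha y^\beta P(x^\lambda y^\mu)$. The converse verifications with $(m,n)=(\mu,-\lambda)$ are also right.

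The paper gives no proof of this proposition; it is imported from CDGHL2025, and your coefficient-comparison argument is the standard one. One small simplification: once $p(x+tm,y+tn)=p(x,y)$ holds identically in $t$, you can substitute $t=-x/m$ (if $m\neq 0$) or $t=-y/n$ (if $n\neq 0$). This gives $p=P(nx-my)$ directly, without differentiating or changing coordinates.
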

Let $p$ be an integer-linear irreducible polynomial in~$\set K[x,y]$.
By Proposition~\ref{PROP:intlinear}, there exists a pair 
$(P(z),\{\lambda,\mu\})\in \set K[z]\times\set Z^2$ satisfying 
\begin{equation}\label{EQ:ilform}
p(x,y) = P(\lambda x + \mu y) \quad \big(\text{resp.}\ p(x,y) = x^\alpha y^\beta P(x^\lambda y^\mu)\big)
\end{equation}
in the usual shift (resp.\ $q$-shift) case, where $\alpha,\beta\in \set Z$ can be uniquely determined 
once the pair $(P(z),\{\lambda,\mu\})$ is fixed.
Such a pair becomes unique if we further impose the following conditions:
\begin{itemize}
\item[(i)] $\gcd(\lambda,\mu) = 1$ and $\mu\geq 0$;
\item[(ii)] $(\lambda,\mu) = (1,0)$ if $p\in \set K[x]$ and $(\lambda,\mu) = (0,1)$ if $p\in \set K[y]$;
\item[(iii)] $P(0) \neq 0$ in the $q$-shift case,
\end{itemize}
and it will be referred to as the {\em univariate representation} of~$p$. 
Now assume that $(P(z),\{\lambda,\mu\})$ is the univariate representation of~$p$.
We remark that in the $q$-shift case, $x^\alpha y^\beta$ appearing in \eqref{EQ:ilform} will 
be the trailing monomial of~$p$ with respect to the pure lexicographic order $x\prec y$,
and thus we have $\alpha,\beta\geq 0$.
Since $\gcd(\lambda,\mu) = 1$ and $\mu\geq 0$, there exist unique integers $s,t$
satisfying the constraints that $0\leq s<\mu$ and $|t|<|\lambda|$ if $\lambda\mu\neq 0$,
or $(s,t) = (1,0)$ if $\mu = 0$, or $(s,t) = (0,1)$ if $\lambda = 0$, such that $s\lambda + t\mu = 1$. 
Define the operator 
\[
\delta_{\lambda,\mu}=\sigma_x^s\sigma_y^t.
\]
Then $\delta_{\lambda,\mu}(P(z)) = \sigma_z(P(z))$, where $z = \lambda x + \mu y$ in the usual shift case 
and $z = x^\lambda y^\mu$ in the $q$-shift case, and thus it allows us to treat integer-linear polynomials 
as univariate ones. Moreover, we make frequent use in the future of the trivial relations
\[
\delta_{\lambda,\mu}^\lambda(p) =c_1 \sigma_x(p)
\quad\text{and}\quad
\delta_{\lambda,\mu}^\mu(p) = c_2 \sigma_y(p)
\quad\text{for some}\ c_1,c_2\in \set K.
\]
For a Laurent polynomial $\alpha = \sum_{i=m}^n k_i\delta_{\lambda,\mu}^i$
in $\set Z[\delta_{\lambda,\mu},\delta_{\lambda,\mu}^{-1}]$, we similarly define
\[
p^\alpha = \prod_{i=m}^n \delta_{\lambda,\mu}^i(p)^{k_i}.
\]

We now extend the notion of $\sigma_y$-equivalence to the bivariate case. 
Let $a,b$ be two irreducible polynomials in~$\set K[x,y]$. We say that $a$ and $b$ are 
{\em $(\sigma_x,\sigma_y)$-equivalent} if $a$ is an associate of $\sigma_x^m\sigma_y^n(b)$ for some 
integers $m,n$.
This is again an equivalence relation. Further assume that $a$ and $b$ are both integer-linear
whose univariate representations are given by $(P(z),\{\lambda,\mu\})$ and $(Q(z),\{\nu,\omega\})$, 
respectively. It is then not hard to see that $a,b$ are $(\sigma_x,\sigma_y)$-equivalent if and only if
$(\lambda,\mu) = (\nu,\omega)$ and $P(z),Q(z)$ are $\sigma_z$-equivalent, and in this case
$a$ is an associate of $\delta_{\lambda,\mu}^\ell(b)$ for some integer~$\ell$.

Let $d$ be a $\sigma_y$-normal and integer-linear polynomial in~$\set F[y]$. 
By grouping together the factors in the same $(\sigma_x,\sigma_y)$-equivalence class, 
we can decompose $d$ as
\begin{equation}\label{EQ:ildecomp}
d = c\,p_1^{\alpha_1}\cdots p_m^{\alpha_m},
\end{equation}
where $c\in \set F$, $m\in \set N$, each $p_i\in\set K[x,y]$ is $\sigma_y$-normal, integer-linear, 
irreducible, and has univariate representation $(P_i(z),\{\lambda_i,\mu_i\})$ with $\mu_i>0$ and $P_i(z)$ being 
$\sigma_z$-monic, and each $\alpha_i\in \set N[\delta_{\lambda_i,\mu_i},\delta_{\lambda_i,\mu_i}^{-1}]\setminus\{0\}$. 
Moreover, the $p_i$ are pairwise $(\sigma_x,\sigma_y)$-inequivalent. 
Similar to the univariate case, the factorization \eqref{EQ:ildecomp} is not unique; 
however, the components $p_i^{\alpha_i}$ are uniquely determined by~$d$ 
since $\set F[y]$ is a unique factorization domain.

Recall that for two integers $\lambda,\mu$ with $\mu\neq 0$, the {\em remainder of~$\lambda$ modulo~$\mu$},
and the {\em quotient of~$\lambda$ by~$\mu$} are respectively defined as the unique integers $i$ and $j$
such that $\lambda = \mu j+i$ and $0\leq i<|\mu|$.
For convenience, we sometimes denote the remainder~$i$ by $(\lambda \mod \mu)$.

In order to describe a multiple of significant denominators of $\sigma_y$-remainders, 
we introduce below several lemmas.
The first lemma is an extended version of~\cite[Lemma~5.3]{Huan2016}.
\begin{lemma}\label{LEM:stronglycoprime}
Let $K\in \set F(y)$ be $\sigma_y$-reduced and let $p\in \set F(y)$ be 
$\sigma_y$-normal and irreducible over~$\set F$. Then there exists an integer 
$m$ such that $\sigma_y^m(p)$ is strongly coprime with~$K$.
\end{lemma}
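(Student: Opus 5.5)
The plan is to reduce everything to the finite set of integers $j$ for which $\sigma_y^j(p)$ divides the numerator or the denominator of~$K$, and then to push $p$ far enough along its $\sigma_y$-orbit to avoid that set in the required direction. Write $K=u/v$ with $u,v\in\set F[y]$ coprime. Since $p$ is irreducible over~$\set F$, so is every $\sigma_y^j(p)$. Hence for any $w\in\set F[y]$, $\gcd(w,\sigma_y^j(p))$ is either $1$ or an associate of $\sigma_y^j(p)$, the latter exactly when $\sigma_y^j(p)\mid w$. Moreover, $\sigma_y$-normality of~$p$ gives $\gcd(\sigma_y^i(p),\sigma_y^j(p))=\sigma_y^i\big(\gcd(p,\sigma_y^{j-i}(p))\big)=1$ for $i\neq j$. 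So the $\sigma_y^j(p)$, $j\in\set Z$, are pairwise non-associate irreducibles.

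Set
\[
J_u=\{j\in\set Z\mid \sigma_y^j(p)\mid u\}\quad\text{and}\quad J_v=\{j\in\set Z\mid \sigma_y^j(p)\mid v\}.
\]
Both sets are finite, because $u,v\neq0$ have only finitely many non-associate irreducible factors. The key observation is that $J_u$ and $J_v$ cannot both be nonempty. Indeed, if $i\in J_u$ and $j\in J_v$, then $\sigma_y^i(p)=\sigma_y^{i-j}(\sigma_y^j(p))$ divides both $u$ and $\sigma_y^{i-j}(v)$. This contradicts the $\sigma_y$-reducedness of~$K$, which requires $\gcd(u,\sigma_y^\ell(v))=1$ for every integer~$\ell$.

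By the definition of strong coprimality, $\sigma_y^m(p)$ is strongly coprime with~$K$ if and only if $m+\ell\notin J_u$ and $m-\ell\notin J_v$ for all $\ell\in\set N$. Equivalently, $J_u\cap[m,\infty)=\emptyset$ and $J_v\cap(-\infty,m]=\emptyset$. We now split into cases:
\begin{itemize}
\item if $J_u=J_v=\emptyset$, take $m=0$;
\item if $J_v=\emptyset\neq J_u$, take $m=\max J_u+1$;
\item if $J_u=\emptyset\neq J_v$, take $m=\min J_v-1$.
\end{itemize}
In each case both conditions hold, which proves the lemma.

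The only step requiring care is the incompatibility of $J_u$ and $J_v$. This is exactly where $\sigma_y$-reducedness (over all integer shifts, not only nonnegative ones) is used. Without it, a shift of~$p$ dividing $u$ and another shift dividing $v$ could trap every choice of~$m$. Everything else is bookkeeping with the pairwise coprimality coming from $\sigma_y$-normality.
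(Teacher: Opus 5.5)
Your proof is correct and follows the paper's argument: the same three cases (no shift of $p$ divides $u$ or $v$; some shift divides $u$, so by $\sigma_y$-reducedness none divides $v$; and symmetrically for $v$), with the same choices $m=0$, $m=\max+1$, $m=\min-1$. Your version is in fact slightly better. You take the max and min over all of $\set Z$, whereas the paper writes them over $\set N$. Over $\set N$ the Case~3 recipe can fail: in the usual shift case with $K=1/(y(y+8))$ and $p=y+5$, it gives $m=2$, but $\sigma_y^{-5}(p)=y$ divides $v$, whereas your choice $m=-6$ works.
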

\begin{proof}
Let $u$ and $v$ be the numerator and denominator of~$K$, respectively.
The following three cases can be distinguished.

\smallskip\noindent
{\em Case~1.} $\sigma_y^\ell(p)\nmid u$ and $\sigma_y^\ell(p)\nmid v$ for all
integers~$\ell$. Then $p$ is strongly coprime with~$K$ by definition. 
Letting $m = 0$ yields the lemma.

\smallskip\noindent
{\em Case~2.} $\sigma_y^k(p)\mid u$ for some integer~$k$. Since $K$ is 
$\sigma_y$-reduced, $\gcd(\sigma_y^\ell(p),v) =1$ for every integer~$\ell$.
Set $m = \max\{i\in \set N\mid \sigma_y^i(p)\mid u\}+1$, which is a finite 
integer as $p$ is $\sigma_y$-normal. By definition, $\sigma_y^m(p)$ is 
strongly coprime with~$K$.

\smallskip\noindent
{\em Case~3.} $\sigma_y^k(p)\mid v$ for some integer~$k$. Since $K$ is 
$\sigma_y$-reduced, $\gcd(\sigma_y^\ell(p),u) =1$ for every integer~$\ell$.
Set $m = \min\{i\in \set N\mid \sigma_y^i(p)\mid v\}-1$, which is a finite 
integer as $p$ is $\sigma_y$-normal. By definition, $\sigma_y^m(p)$ is 
strongly coprime with~$K$.
\end{proof}

The next lemma indicates that there are finitely many $\sigma_y$-equivalence classes 
produced by shifting a $\sigma_y$-normal, integer-linear and irreducible polynomial in~$\set K[x,y]$
as a univariate one.
\begin{lemma}\label{LEM:shiftfinite}
Let $p\in\set K[x,y]$ be $\sigma_y$-normal, integer-linear, irreducible and of univariate representation 
$(P(z),\{\lambda,\mu\})$ with $\mu >0$. Then the $\mu$ polynomials
\[
p,\, p^{\delta_{\lambda,\mu}},\, \dots, \, p^{\delta_{\lambda,\mu}^{\mu-1}}
\]
are mutually $\sigma_y$-inequivalent. 
Moreover, $p^{\delta_{\lambda,\mu}^\ell}$ is $\sigma_y$-equivalent to $p^{\delta_{\lambda,\mu}^{(\ell\mod \mu)}}$
for all $\ell\in\set Z$.
\end{lemma}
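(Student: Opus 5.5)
We work in the univariate picture given by the univariate representation. Write $\delta=\delta_{\lambda,\mu}=\sigma_x^s\sigma_y^t$, and note that $p^{\delta^\ell}=\delta^\ell(p)$. The key relation is that $\delta^\mu(p)$ is an associate of $\sigma_y(p)$, with the associating constant lying in $\set K$. Both assertions of the lemma reduce to understanding when $\delta^i(p)$ and $\delta^j(p)$ are $\sigma_y$-equivalent. We will show that this happens if and only if $i\equiv j \pmod \mu$.

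\emph{The ``moreover'' part.} Write $\ell=\mu j+i$ with $i=(\ell\mod\mu)$. Then
\[
\delta^\ell(p)=\delta^{i}\big(\delta^{\mu j}(p)\big)=\delta^i\big(c\,\sigma_y^j(p)\big)=c'\,\sigma_y^j\big(\delta^i(p)\big)
\]
for some nonzero $c,c'\in\set K$ (in the $q$-case, also up to constants coming from the monomial prefactor $x^\alpha y^\beta$, which are units in $\set F$). This uses only that $\sigma_x$, $\sigma_y$ commute and map $\set K$-constants to themselves. Hence $p^{\delta^\ell}$ is $\sigma_y$-equivalent to $p^{\delta^{(\ell\mod\mu)}}$.

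\emph{Mutual inequivalence.} Suppose $\delta^i(p)$ and $\delta^j(p)$ are $\sigma_y$-equivalent with $0\le i<j<\mu$. Then there is $k\in\set Z$ and $c\in\set F$ with
\[
\delta^{j-i}(p)=c\,\sigma_y^k(p)=c'\,\delta^{\mu k}(p).
\]
Put $n=j-i-\mu k$; then $n\neq0$, since $0<j-i<\mu$. Applying $\delta^{-\mu k}$ shows that $\delta^{n}(p)$ is an associate of $p$ over $\set F=\set K(x)$. Passing to $P$ with $z=\lambda x+\mu y$ (resp.\ $z=x^\lambda y^\mu$), this says $P(z+n)=c\,P(z)$ (resp.\ $P(q^nz)=c\,P(z)$), with $c$ a priori in $\set K(x)$. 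Comparing leading coefficients in $z$ (or $y$) forces $c\in\set K$. In the usual shift case, $P(z+n)=cP(z)$ with $n\ne0$ forces $P$ to be constant. In the $q$-case, $P(0)\ne0$ together with $P(q^nz)=cP(z)$ gives $c=1$, and then every coefficient of $z^e$ with $e>0$ must vanish because $q$ is not a root of unity, so again $P$ is constant. But $p$ is irreducible and depends on $y$ (because $\mu>0$), so $P$ is nonconstant, a contradiction.

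\emph{Main obstacle.} The delicate step is the bookkeeping of associates: associates over $\set F$ versus over $\set K$, and the monomial factor $x^\alpha y^\beta$ in the $q$-case. I will handle this by working directly with $P(z)$ and tracking what $\delta$ does to $x^\alpha y^\beta$, which only contributes units of $\set F$ times powers of $y$. Since $p$ is irreducible and not a monomial, the power of $y$ must in fact be $y^0$, so $\beta=0$, and the $q$-case reduces cleanly to the univariate statement about $P$. Note that $\sigma_y$-normality of $p$ is not actually needed beyond guaranteeing $\mu>0$.
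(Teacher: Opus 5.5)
Your proof is correct and takes essentially the same route as the paper. You reduce to $\delta_{\lambda,\mu}^n(p)$ being an associate of $p$ for some $n\not\equiv 0 \pmod{\mu}$, deduce that $P$ is constant (the paper notes that $P$ is $\sigma_z$-special and invokes Lemma~\ref{LEM:special}, whereas you check $P(z+n)=cP(z)$, resp.\ $P(q^nz)=cP(z)$, by hand), and contradict irreducibility; the ``moreover'' part is the same computation $\delta_{\lambda,\mu}^{\mu j}(p)\sim\sigma_y^j(p)$.

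One small tightening: nonconstancy of $P$ should be justified by $\mu\ge 2$ (forced once a pair $i<j<\mu$ exists), which gives $p\notin\set K[x]\cup\set K[y]$ as in the paper. ``$p$ depends on $y$'' alone is not enough in the $q$-case, since $p=y$ has $P=1$. Also, the prefactor $x^\alpha y^\beta$ needs no separate treatment, because $\delta_{\lambda,\mu}$ maps it to a $\set K$-multiple of itself.
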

\begin{proof}
There is nothing to show if $\mu = 1$. Now assume that $\mu>1$. Then $p\notin \set K[x]\cup\set K[y]$.
In order to verify the first assertion, it amounts to showing that $p$ is $\sigma_y$-inequivalent to 
all the $p^{\delta_{\lambda,\mu}^i}$ for $i=1,\dots,\mu-1$.
Suppose that there exists an integer $i$ with $1\leq i\leq \mu-1$ such that $p$ and 
$p^{\delta_{\lambda,\mu}^i}$ are $\sigma_y$-equivalent. Then $p$ is an associate of 
$\sigma_y^\ell(p^{\delta_{\lambda,\mu}^i})$
for some integer~$\ell$. Since $(P(z),\{\lambda,\mu\})$ is the univariate representation of~$p$, 
it follows that $P(z)\mid \sigma_z^{i+\mu \ell}(P(z))$. Observe that $i+\mu\ell\neq 0$ as $1\leq i\leq \mu-1$. 
Thus $P(z)\in\set K[z]$ is $\sigma_z$-special. 
By Lemma~\ref{LEM:special}, either $P(z)\in \set K$ in the usual shift case or 
$P(z)/z^k\in \set K$ for some integer~$k$ in the $q$-shift case. 
Notice that $P(0) \neq 0$ in the $q$-shift case. So we have $P(z)\in \set K$ in either case.
Thus either $p\in \set K$ in the usual shift case or $p = c\,x^\alpha y^\beta$ for $c\in \set F$ and 
$\alpha,\beta\in \set N$ in the $q$-shift case, leading to a contradiction that $p\in \set K[x]\cup\set K[y]$
as $p$ is irreducible.
Therefore, the first assertion holds.

Let $\ell\in \set Z$ and let $i,j$ be the remainder of~$\ell$ modulo~$\mu$ and the quotient of~$\ell$ by~$\mu$, respectively.
Then $\ell = \mu j + i$ and $0\leq i\leq \mu-1$. Since $(P(z),\{\lambda,\mu\})$ is the univariate representation of~$p$, 
we see that $p^{\delta_{\lambda,\mu}^{\ell}}$ is an associate of $\sigma_y^j(p^{\delta_{\lambda,\mu}^i})$, which means that
$p^{\delta_{\lambda,\mu}^{\ell}}$ is $\sigma_y$-equivalent to~$p^{\delta_{\lambda,\mu}^i}$.
\end{proof}
We derive below a ``local" multiple, which constitutes an important ingredient for the general case
described in the succeeding proposition.
\begin{lemma}\label{LEM:localcommonden}
Let $K\in\set F(y)$ be $\sigma_y$-reduced, let $p\in\set K[x,y]$ be integer-linear, irreducible and 
of univariate representation $(P(z),\{\lambda,\mu\})$ with $\mu>0$, and let 
$\alpha \in \set N[\delta_{\lambda,\mu},\delta_{\lambda,\mu}^{-1}]\setminus\{0\}$ 
with maximum coefficient $k$ with respect to~$\delta_{\lambda,\mu}$.
Assume that $p^\alpha$ is $\sigma_y$-normal and strongly coprime with~$K$.
Then there exists a multiple $D\in \set K[x,y]$ of~$p^\alpha$ which is $\sigma_y$-normal, strongly coprime with~$K$,
and $\sigma_y$-related to the polynomial 
\[
p^{k(1-\delta_{\lambda,\mu}^\mu)/(1-\delta_{\lambda,\mu})}.
\]
\end{lemma}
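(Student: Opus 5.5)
The plan is to build $D$ one residue class at a time, using Lemma~\ref{LEM:shiftfinite} to organize the factors of $p^\alpha$. Write $\delta=\delta_{\lambda,\mu}$ and $\alpha=\sum_{i}k_i\delta^i$ with $k_i\in\set N$ and $\max_i k_i=k$. Each factor $\delta^i(p)$ of $p^\alpha$ is an associate of $\sigma_y^{j}(\delta^{i'}(p))$, where $i'=(i\mod\mu)$ and $j$ is the quotient of $i$ by $\mu$. This follows from $\delta^\mu(p)=c\,\sigma_y(p)$ for some $c\in\set K$, exactly as in the second part of Lemma~\ref{LEM:shiftfinite}. So the exponents $i$ split into $\mu$ residue classes modulo~$\mu$. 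The factors coming from different classes are pairwise $\sigma_y$-inequivalent, by the first part of Lemma~\ref{LEM:shiftfinite}.

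The first claim I would establish is that each residue class contains at most one index $i$ with $k_i>0$. If $i\ne i'$ were in the same class with $k_i,k_{i'}>0$, then $\delta^{i}(p)$ and $\delta^{i'}(p)$ would both divide $p^\alpha$. They would be associates of two distinct $\sigma_y$-shifts of the same irreducible polynomial, so $\gcd(p^\alpha,\sigma_y^\ell(p^\alpha))\neq1$ for some $\ell\neq0$. This contradicts the $\sigma_y$-normality of $p^\alpha$. Note that distinct $\sigma_y$-shifts really are distinct up to associates, because $p$ is $\sigma_y$-normal, being a factor of $p^\alpha$ up to a $\sigma_x$-shift.

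Now define $D=\prod_{j=0}^{\mu-1}q_j^k$, where the $q_j$ are chosen as follows.
\begin{itemize}
\item[(a)] If residue class $j$ contains the unique index $i_j$ with $k_{i_j}>0$, set $q_j=\delta^{i_j}(p)$. Then $q_j^k$ is a multiple of $\delta^{i_j}(p)^{k_{i_j}}$, since $k_{i_j}\le k$. Moreover $q_j$ is strongly coprime with~$K$, since it divides $p^\alpha$.
\item[(b)] If class $j$ contains no such index, apply Lemma~\ref{LEM:stronglycoprime} to the irreducible polynomial $\delta^j(p)$ and choose $m_j$ so that $q_j=\sigma_y^{m_j}(\delta^j(p))$ is strongly coprime with~$K$. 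This requires $\delta^j(p)$ to be $\sigma_y$-normal. That holds because $\sigma_x,\sigma_y$ commute, so $\delta^j(p)$ is the image of the $\sigma_y$-normal polynomial $p$ under an automorphism commuting with~$\sigma_y$.
\end{itemize}
By construction $p^\alpha\mid D$.

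It remains to check the three required properties of~$D$.
\begin{itemize}
\item[(1)] $\sigma_y$-normality. Each $q_j^k$ is $\sigma_y$-normal: $q_j$ is a $\sigma_y$-normal irreducible polynomial, so $\gcd(q_j,\sigma_y^\ell(q_j))=1$ for $\ell\ne0$. The $q_j$ are pairwise $\sigma_y$-inequivalent by Lemma~\ref{LEM:shiftfinite}, hence the $q_j^k$ are pairwise $\sigma_y$-coprime. Proposition~\ref{PROP:props}~(i) then gives that $D$ is $\sigma_y$-normal.
\item[(2)] Strong coprimality with~$K$. This reduces to the irreducible factors: $\gcd(u,\sigma_y^\ell(D))\neq1$ forces $\gcd(u,\sigma_y^\ell(q_j))\neq1$ for some~$j$, and similarly for~$v$. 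So strong coprimality of each $q_j$ suffices.
\item[(3)] $\sigma_y$-relatedness. Each irreducible factor of $D$ is $q_j$, which is a $\sigma_y$-shift of $\delta^j(p)$ (up to a unit), and it occurs with multiplicity exactly~$k$. The target polynomial is $p^{k(1-\delta^\mu)/(1-\delta)}=\prod_{j=0}^{\mu-1}\delta^j(p)^k$, whose irreducible factors are the $\delta^j(p)$, each with multiplicity~$k$. Hence $D$ and this polynomial are $\sigma_y$-related in the sense of Definition~\ref{DEF:srelated}. The target polynomial is itself $\sigma_y$-normal by the same argument as in~(1).
\end{itemize}

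I expect the main obstacle to be the bookkeeping in the first claim and in~(1). There one must use carefully that $\delta^\mu$ agrees with $\sigma_y$ only up to a constant factor in~$\set K$, and that shifts within a class cannot coincide, so that the class representatives and the multiplicities are well defined.
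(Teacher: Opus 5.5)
Your proposal is correct and is essentially the paper's proof. The paper likewise takes one representative per residue class modulo~$\mu$: the support element of~$\alpha$ when the class meets the support, and otherwise a shift $p^{\delta_{\lambda,\mu}^{j+\mu\ell_j}}$ made strongly coprime with~$K$ via Lemma~\ref{LEM:stronglycoprime}. It then sets $D=p^\beta$ with every multiplicity raised to~$k$ and uses Lemma~\ref{LEM:shiftfinite} for $\sigma_y$-normality and $\sigma_y$-relatedness; you only spell out in more detail the step that the support of~$\alpha$ meets each residue class at most once, which the paper derives in one line from the $\sigma_y$-normality of~$p^\alpha$.
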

\begin{proof}
Let $\Lambda$ be the support of~$\alpha$, that is, the set of indices $i\in\set Z$ with the property that 
the coefficient of $\delta_{\lambda,\mu}^i$ 
in $\alpha$ is nonzero, and set
\[
\bar \Lambda = \{0,1,\dots,\mu-1\}\setminus\{(i \mod \mu)\mid i\in \Lambda\}.
\]
Because $p^\alpha$ is $\sigma_y$-normal, so is $p^{\delta_{\lambda,\mu}^i}$ for any $i\in \set Z$.
For each $j\in \bar \Lambda$, by Lemma~\ref{LEM:stronglycoprime}, there exists an integer $\ell_j$ such that 
$\sigma_y^{\ell_j}(p^{\delta_{\lambda,\mu}^j})$ (and thus $p^{\delta_{\lambda,\mu}^{j+\mu\ell_j}}$) is
strongly coprime with~$K$.
Since $\alpha \in \set N[\delta_{\lambda,\mu},\delta_{\lambda,\mu}^{-1}]\setminus\{0\}$,
we have $k>0$. Define
\[
\beta = k \Big(\sum_{i\in \Lambda}\delta_{\lambda,\mu}^i 
+ \sum_{j\in\bar\Lambda}\delta_{\lambda,\mu}^{j+\mu\ell_j}\Big)
\]
and $D = p^\beta$. Clearly, $\beta\in \set N[\delta_{\lambda,\mu},\delta_{\lambda,\mu}^{-1}]\setminus\{0\}$,
$D\in \set K[x,y]$ and $p^\alpha \mid D$. 
Since $p^\alpha$ and the $p^{\delta_{\lambda,\mu}^{j+\mu\ell_j}}$ are strongly coprime with~$K$, so is~$D$. 
Due to the $\sigma_y$-normality of $p^\alpha$, we see from 
Lemma~\ref{LEM:shiftfinite} that $\Lambda\cup\bar\Lambda$
has exactly $\mu$ elements and different elements have distinct remainders modulo~$\mu$.
Again, by Lemma~\ref{LEM:shiftfinite}, there is a one-to-one correspondence $\varphi$ between the two sets
\[
\Big\{p^{\delta_{\lambda,\mu}^i}\Big\}_{i\in \Lambda}
\cup\ \Big\{p^{\delta_{\lambda,\mu}^{j+\mu\ell_j}}\Big\}_{j\in\bar\Lambda}
\quad\text{and}\quad
\Big\{p,\, p^{\delta_{\lambda,\mu}},\, \dots, \, p^{\delta_{\lambda,\mu}^{\mu-1}}\Big\},
\]
such that $a$ and $\varphi(a)$ are $\sigma_y$-equivalent for all $a$ in the former set.
By the $\sigma_y$-inequivalence of the~$\mu$ polynomials 
$p,\, p^{\delta_{\lambda,\mu}},\, \dots, \, p^{\delta_{\lambda,\mu}^{\mu-1}}$,
we obtain that $D$ and $p^{k(1-\delta_{\lambda,\mu}^\mu)/(1-\delta_{\lambda,\mu})}$ are both $\sigma_y$-normal,
and thus they are $\sigma_y$-related.
\end{proof}
The following 
lemma can be extracted from the proof of~\cite[Theorem~4.1]{CDGHL2025}.
\begin{lemma}\label{LEM:adjustrem}
Let $K\in \set F(y)$ be $\sigma_y$-standard, let $d\in \set F[y]$ be $\sigma_y$-normal
and strongly coprime with~$K$, and let $s$ be a $\sigma_y$-remainder with respect 
to~$K$. Then there exists a $\sigma_y$-remainder $t$ with respect to $K$ such that
$s-t\in \im(\Delta_K)$ and the significant denominator of~$t$ is $\sigma_y$-coprime
with~$d$.
\end{lemma}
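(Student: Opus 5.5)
We need to prove Lemma~\ref{LEM:adjustrem}. The idea is to take the significant denominator of $s$, find its irreducible factors that are $\sigma_y$-equivalent to factors of~$d$, and shift each one away into a different orbit position. This is done by subtracting suitable elements of $\im(\Delta_K)$, while keeping the remainder form.

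Write $s = h + p/v$, where $h$ is proper with significant denominator $e$ (which is $\sigma_y$-normal and strongly coprime with~$K$), $p\in\im(\phi_K)^\top$, and $K=u/v$. We work one $\sigma_y$-orbit at a time.

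Let $a$ be an irreducible factor of $e$ with multiplicity $k$, and suppose $\gcd(e,\sigma_y^\ell(d))\neq 1$ for some $\ell$ with $\sigma_y^j(a)\mid d$. Since $d$ is $\sigma_y$-normal, only finitely many shifts $\sigma_y^j(a)$ divide $d$. The target is an integer $m$ such that:
\begin{itemize}
\item[(a)] $\sigma_y^m(a)$ is still strongly coprime with~$K$;
\item[(b)] no shift of $\sigma_y^m(a)$ other than itself, i.e.\ no $\sigma_y^{m+i}(a)$ with $i\neq0$, divides $d$.
\end{itemize}
A first worry is that (b) can never hold: any $\sigma_y^{m}(a)$ lies in the same orbit as the $\sigma_y^j(a)$ dividing $d$. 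So I first reread $\sigma_y$-coprime: it only requires $\gcd(t\text{-den},\sigma_y^\ell(d))=1$ for \emph{nonzero}~$\ell$. Hence it suffices to move the orbit representative to exactly the position $j$ where $\sigma_y^j(a)\mid d$, so that the only overlap is at shift~$0$. Strong coprimality (a) must then be checked at that position. This looks delicate, because $d$ is only assumed strongly coprime with $K$, and $d$ contains $\sigma_y^j(a)$ itself. So $\sigma_y^j(a)$ is also strongly coprime with $K$, and (a) holds. Moreover, $d$ being $\sigma_y$-normal means only one such $j$ exists per orbit. So the plan is: move the $a$-part of $h$ onto $\sigma_y^j(a)$.

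The shifting step uses the standard identity: for a polynomial $b$ and an integer $\ell$,
\[
\frac{b}{a^k} \equiv \frac{\sigma_y^\ell(b)\, \prod (\text{factors of }K)}{\sigma_y^\ell(a)^k} \pmod{\im(\Delta_K)}.
\]
This comes from $\Delta_K(g)=K\sigma_y(g)-g$ applied iteratively, e.g.\ with $g=\sigma_y^{-1}(b/a^k)$, giving $b/a^k\equiv K\,\sigma_y(b)/\sigma_y(a)^k$. Repeating this, we need the new denominators never to absorb factors of $u$ or $v$ that combine with $a$'s shifts. Strong coprimality of both $a$ (as a factor of $e$) and of $\sigma_y^j(a)$ ensures that no intermediate shift $\sigma_y^i(a)$ between $0$ and $j$ divides $u$ or $v$ in the problematic direction. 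I expect to need a short case check on the sign of $j$ using the definition $\gcd(u,\sigma_y^{\ell}(\cdot))=\gcd(v,\sigma_y^{-\ell}(\cdot))=1$ for $\ell\in\set N$; this is the main obstacle.

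After all orbits have been moved, the result is $h'$ plus a rational function with denominator dividing a power of $v$ times a polynomial. Partial fractions separate the proper part with the new $\sigma_y$-normal denominator $e'$. The leftover polynomial-over-$v$ part is handled by Fact~\ref{FAC:polyred}, producing $p'\in\im(\phi_K)^\top$. The new $e'$ is $\sigma_y$-normal and strongly coprime with~$K$ by construction. It is $\sigma_y$-coprime with $d$ because every factor of $e'$ either lies in an orbit disjoint from $d$'s factors or sits at the same position as a factor of~$d$. Then $t=h'+p'/v$ satisfies $s-t\in\im(\Delta_K)$.
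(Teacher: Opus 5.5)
Your strategy is the right one. It matches the proof of \cite[Theorem~4.1]{CDGHL2025}, to which the paper defers instead of giving its own: split $h$ into its parts over the powers $a^k$ of the irreducible factors of its denominator, and move each part whose $a$ is $\sigma_y$-equivalent to a factor $\sigma_y^j(a)$ of $d$ with $j\neq0$ onto exactly that factor. Several of your points are correct:
\begin{itemize}
\item[(a)] $j$ is unique.
\item[(b)] $\sigma_y^j(a)$ is strongly coprime with $K$ because it divides $d$.
\item[(c)] Your final check holds: the new denominator is $\sigma_y$-normal, strongly coprime with $K$, and $\sigma_y$-coprime with $d$.
\end{itemize}

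The gap is the shifting step itself. You leave it open as the ``main obstacle'', and the bookkeeping you describe for it is wrong. The raw congruence $b/a^k\equiv K\sigma_y(b)/\sigma_y(a)^k$ comes from $g=b/a^k$; your choice $g=\sigma_y^{-1}(b/a^k)$ gives $Kb/a^k\equiv\sigma_y^{-1}(b/a^k)$ instead. Iterating it $j$ times multiplies by $\prod_{i=0}^{j-1}\sigma_y^i(K)$. So the by-product sits over $\prod_{i=0}^{j-1}\sigma_y^i(v)$, which in general is not a power of $v$. For $j<0$ you must divide by shifts of $K$, so the by-product sits over $\prod_{i=1}^{|j|}\sigma_y^{-i}(u)$. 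Your formula, which has $K$-factors only in the numerator, does not account for this. Neither by-product has the form (polynomial)$/v$, so Fact~\ref{FAC:polyred} does not dispose of it. You give no other argument that it reduces to that form without creating new significant denominators.

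The repair is to split after every single step. Write $\equiv$ for congruence modulo $\im(\Delta_K)$, and let $c=\sigma_y^i(a)$ be the current position with numerator $b$. For an upward step, $\Delta_K(b/c^k)$ gives, with polynomials $b',w$:
\[
\frac{b}{c^k}\equiv\frac{u\,\sigma_y(b)}{v\,\sigma_y(c)^k}=\frac{b'}{\sigma_y(c)^k}+\frac{w}{v}\qquad\text{provided }\gcd(v,\sigma_y(c))=1.
\]
For a downward step, $g=\sigma_y^{-1}\big(vb/(uc^k)\big)$ satisfies $K\sigma_y(g)=b/c^k$, so
\[
\frac{b}{c^k}\equiv\frac{\sigma_y^{-1}(vb)}{\sigma_y^{-1}(u)\,\sigma_y^{-1}(c)^k}=\frac{b'}{\sigma_y^{-1}(c)^k}+\frac{B}{\sigma_y^{-1}(u)}+P\qquad\text{provided }\gcd(u,c)=1.
\]
Then $B/\sigma_y^{-1}(u)\equiv\sigma_y(B)/v$, because $\Delta_K\big(B/\sigma_y^{-1}(u)\big)=\sigma_y(B)/v-B/\sigma_y^{-1}(u)$, and $P=Pv/v$.

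Walking from $a$ to $\sigma_y^j(a)$ therefore needs
\begin{itemize}
\item[(a)] $\gcd(v,\sigma_y^i(a))=1$ for $1\le i\le j$ if $j>0$;
\item[(b)] $\gcd(u,\sigma_y^i(a))=1$ for $j<i\le 0$ if $j<0$.
\end{itemize}
Both follow from the strong coprimality of the target alone, which gives $\gcd(v,\sigma_y^i(a))=1$ for all $i\le j$ and $\gcd(u,\sigma_y^i(a))=1$ for all $i\ge j$.

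Finally, put each accumulated $w/v$ into remainder form. Write $w=\phi_K(g')+p'$ with $p'\in\im(\phi_K)^\top$, use $\phi_K(g')/v=\Delta_K(g')$, and add $p'$ to the original $p$. With these steps supplied, the rest of your argument goes through.
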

We are now ready to obtain the a priori multiple as promised at the beginning of this section.
\begin{proposition}\label{PROP:commonden}
Let $T$ be a $(\sigma_x,\sigma_y)$-hypergeometric term whose $\sigma_y$-quotient
has a $\sigma_y$-standard kernel $K$ and a corresponding shell~$S$,
and let $r$ be a $\sigma_y$-remainder of~$S$ with respect to~$K$ whose
significant denominator $d$ is integer-linear and admits the factorization \eqref{EQ:ildecomp}.
For all $j=1,\dots,m$, set $k_j$ to be the maximum coefficient of~$\alpha_j$ with respect 
to~$\delta_{\lambda,\mu}$, and define
\[
D_0 = \prod_{j=1}^mp_j^{k_j(1-\delta_{\lambda_j,\mu_j}^{\mu_j})/(1-\delta_{\lambda_j,\mu_j})}.
\]
Then $D_0\in\set K[x,y]$ is $\sigma_y$-normal, and there exists a multiple $D\in \set F[y]$ of~$d$ which is 
$\sigma_y$-normal, strongly coprime with~$K$, and $\sigma_y$-related to~$D_0$
such that \eqref{EQ:ithhyperred} holds for every $i\in \set N$, where $H = T/S$, $g_i\in \set F(y)$ and 
$r_i$ is a $\sigma_y$-remainder with respect to~$K$ whose significant denominator~$d_i$ divides~$D$.
In particular, $d_0$ and $d$ are associates.
\end{proposition}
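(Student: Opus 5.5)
The plan is to build $D$ factor by factor from the decomposition \eqref{EQ:ildecomp} of $d$, and then to adjust the remainders $r_i$ of Proposition~\ref{PROP:shiftrem} so that their significant denominators are forced into~$D$.

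\emph{Construction of $D$.} Each component $p_j^{\alpha_j}$ divides $d$. Since $d$ is $\sigma_y$-normal and strongly coprime with $K$, Proposition~\ref{PROP:props}~(i) shows that $p_j^{\alpha_j}$ is as well. Lemma~\ref{LEM:localcommonden} then gives a multiple $D_j$ of $p_j^{\alpha_j}$ which is $\sigma_y$-normal, strongly coprime with~$K$, and $\sigma_y$-related to $p_j^{k_j(1-\delta^{\mu_j})/(1-\delta)}$. We set $D=\prod_j D_j$.

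Because the $p_j$ are pairwise $(\sigma_x,\sigma_y)$-inequivalent, irreducible factors of distinct $D_j$ are not $\sigma_y$-equivalent. Hence the $D_j$ are pairwise $\sigma_y$-coprime, and $D$ is $\sigma_y$-normal by Proposition~\ref{PROP:props}~(i). Strong coprimality with $K$ is checked factor by factor. The same inequivalence, together with Lemma~\ref{LEM:shiftfinite}, shows that $D_0$ is $\sigma_y$-normal and that $D$ is $\sigma_y$-related to~$D_0$. Finally $d=c\prod p_j^{\alpha_j}$ divides~$D$.

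\emph{Controlling $d_i$.} By Proposition~\ref{PROP:shiftrem}, for every $i$ we get \eqref{EQ:ithhyperred} with some remainder $r_i$ whose significant denominator $d_i$ is $\sigma_y$-related to $\sigma_x^i(d)$. Each irreducible factor of $\sigma_x^i(d)$ has the form $\sigma_x^i(p_j^{\delta^a})$, which is an associate of $p_j^{\delta^{a+\lambda_j i}}$. By Lemma~\ref{LEM:shiftfinite} this is $\sigma_y$-equivalent to exactly one of $p_j,\dots,p_j^{\delta^{\mu_j-1}}$, hence to exactly one irreducible factor of $D_j$, and its multiplicity is at most $k_j$. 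So every irreducible factor $a$ of $d_i$ with multiplicity $e$ is $\sigma_y$-equivalent to a unique factor $b$ of $D$ of multiplicity $k_j\ge e$.

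What remains is to force $a=b$, i.e.\ to remove the shift mismatch. For this I would use the shift-moving mechanism of \cite[Theorem~4.1]{CDGHL2025}: when $a$ and $b$ are $\sigma_y$-equivalent and both $\sigma_y$-normal and strongly coprime with~$K$, the part of $r_i$ with denominator $a^e$ can be rewritten modulo $\im(\Delta_K)$, via partial fractions and repeated use of $\Delta_K$ (as in the proof of Lemma~\ref{LEM:adjustrem}), as a part with denominator $b^e$ plus polynomial pieces. Those polynomial pieces are then absorbed by Fact~\ref{FAC:polyred}. Doing this for every factor yields a new remainder with $d_i\mid D$, still satisfying \eqref{EQ:ithhyperred}.

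\emph{Main obstacle.} The hard step is legitimizing this move. The shift path from $a$ to $b$ may pass through $\sigma_y^\ell(a)$ that divide the numerator or denominator of $K$, and I need the strong coprimality of both endpoints to ensure that the rewriting stays in the space of remainders. This is where I would adapt the argument of \cite[Theorem~4.1]{CDGHL2025} rather than apply Lemma~\ref{LEM:adjustrem} as a black box.

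For $i=0$ no move is needed. Take $r_0=r$, noting that $T=SH$ and $S-r\in\im(\Delta_K)$ give \eqref{EQ:ithhyperred} with $d_0=d$. This also yields the final claim that $d_0$ and $d$ are associates.
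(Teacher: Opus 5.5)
The gap is the step you yourself call the main obstacle: forcing $a=b$. Everything before it is fine and agrees with the paper. That includes the construction of $D=\prod_j D_j$ via Lemma~\ref{LEM:localcommonden}, the checks of $\sigma_y$-normality, strong coprimality and $\sigma_y$-relatedness to $D_0$, and the counting that matches each irreducible factor $a$ of $d_i$, of multiplicity $e$, with a factor $b$ of $D$ of multiplicity $k_j\ge e$. Taking $r_0=r$ is also a legitimate way to get $d_0=d$. For the crux, however, you give only a plan to re-run the shift-moving argument of \cite[Theorem~4.1]{CDGHL2025}. Until that is carried out, the proposition is not proved.

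The missing idea is that Lemma~\ref{LEM:adjustrem}, used as a black box, already closes this step if you apply it \emph{before} invoking relatedness. Apply it to $s=r_i$ and to the polynomial $D$, which is $\sigma_y$-normal and strongly coprime with $K$. You get a remainder $t$ with $r_i-t=\Delta_K(w)$ whose significant denominator is $\sigma_y$-coprime with $D$. Then $\sigma_x^i(T)=\Delta_y((g_i+w)H)+tH$, so $t$ is still a $\sigma_y$-remainder of the same shell with respect to $K$. The relatedness conclusion of Proposition~\ref{PROP:shiftrem} holds for \emph{any} remainder in \eqref{EQ:ithhyperred}, so the significant denominator of $t$ is still $\sigma_y$-related to $\sigma_x^i(d)$. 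Your counting now gives, for each of its irreducible factors $a$ of multiplicity $e$, a factor $b$ of $D$ of multiplicity $k_j\ge e$ with $a$ an associate of $\sigma_y^\ell(b)$ for some $\ell$. The $\sigma_y$-coprimeness with $D$ rules out $\ell\neq 0$, so $a^e\mid D$, and by unique factorization the whole significant denominator divides $D$. No tracking of shift paths past factors of $u$ or $v$ is needed.

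Your direct route could be completed: strong coprimality of both endpoints $a$ and $b$ does make every intermediate $\sigma_y^j(a)$ coprime to $uv$. Note that the leftover pieces are of the form $w/v$ rather than pure polynomials, so they are reduced via $\set F[y]=\im(\phi_K)\oplus\im(\phi_K)^\top$ and not via Fact~\ref{FAC:polyred} verbatim. Doing all this, though, would just reprove Lemma~\ref{LEM:adjustrem}.
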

\begin{proof}
The $\sigma_y$-normality of~$D_0$ is evident by Lemma~\ref{LEM:shiftfinite}
and the $(\sigma_x,\sigma_y)$-inequivalence of the~$p_j$.
Because $d$ is $\sigma_y$-normal and strongly coprime with~$K$, 
so are the $p_j^{\alpha_j}$ by Proposition~\ref{PROP:props}~(i).
For all $j = 1,\dots,m$, applying Lemma~\ref{LEM:localcommonden} to $p_j^{\alpha_j}$ in~\eqref{EQ:ildecomp}
delivers a multiple $D_j\in\set K[x,y]$ of~$p_j^{\alpha_j}$ which is $\sigma_y$-normal, strongly coprime with~$K$,
and $\sigma_y$-related to the polynomial 
\[
p_j^{k_j(1-\delta_{\lambda_j,\mu_j}^{\mu_j})/(1-\delta_{\lambda_j,\mu_j})}.
\]
Let $D = D_1\cdots D_m$. Then $D\in\set K[x,y]$. 
Since the $p_j$ are pairwise $(\sigma_x,\sigma_y)$-inequivalent, 
the~$D_j$ are mutually coprime and also $\sigma_y$-coprime.
Thus $d\mid D$ in $\set F[y]$ and $D$ is again $\sigma_y$-normal, strongly coprime with~$K$, 
and $\sigma_y$-related to~$D_0$. 

Let $i\in \set N$. By Proposition~\ref{PROP:shiftrem}, there exists a rational function $g_i\in \set F(y)$
and a $\sigma_y$-remainder $r_i$ with respect to~$K$ such that \eqref{EQ:ithhyperred} holds.
Without loss of generality, we may further assume that the significant denominator $d_i$ of~$r_i$ is 
$\sigma_y$-coprime with~$D$, for, otherwise, with Lemma~\ref{LEM:adjustrem} applied to $D$ and~$r_i$, 
we can always replace $r_i$ in \eqref{EQ:ithhyperred} by one with the required property. 
We claim that such a $d_i$ actually divides $D$ in~$\set F[y]$. 
The claim is trivial if $d_i\in \set F$. 

Now assume that $d_i\notin \set F$ and let $a_i\in\set F[y]$ be an irreducible factor of~$d_i$ over~$\set F$. 
Since $d_i$ is $\sigma_y$-related to $\sigma_x^i(d)$ by Proposition~\ref{PROP:shiftrem},
there exists an irreducible factor $a\in\set F[y]$ of~$d$ over~$\set F$ such that 
$a_i$ is $\sigma_y$-equivalent to~$\sigma_x^i(a)$.
It follows from the factorization \eqref{EQ:ildecomp} that 
\[
a\ \text{is an associate of}\ p_j^{\delta_{\lambda_j,\mu_j}^\ell}
\quad\text{and thus}\quad
\sigma_x^i(a)\ \text{is an associate of}\ 
p_j^{\delta_{\lambda_j,\mu_j}^{\ell+\lambda_j i}},
\]
where $j,\ell\in\set Z$ with $1\leq j\leq m$. 
This implies that $a_i$ is $\sigma_y$-equivalent to $p_j^{\delta_{\lambda_j,\mu_j}^{\ell+\lambda_j i}}$,
which in turn, by Lemma~\ref{LEM:shiftfinite}, is $\sigma_y$-equivalent to an irreducible factor of~$D_0$ over~$\set F$.
Since $D_0$ is $\sigma_y$-related to~$D$, there exists an irreducible factor of~$D$ over~$\set F$ which is 
$\sigma_y$-equivalent to~$a_i$. 
Due to the $\sigma_y$-coprimeness of~$d_i$ and $D$, we conclude that $a_i\mid D$. 
By the arbitrariness of~$a_i$, we have $d_i\mid D$. 
In particular, $d_0\mid D$. Since $d_0$ is $\sigma_y$-related to $d$, $d\mid D$, 
and $D$ is $\sigma_y$-normal,
we see that $d_0$ and~$d$ only differ by a nonzero element in~$\set F$, that is, they are associates.
\end{proof}

We depict in the next theorem upper and lower order bounds on telescopers.
Note that in the usual shift case, the following assertions (ii) and (iii) coincide with 
Theorems~5.5 and 5.6 in~\cite{Huan2016}, respectively.
\begin{theorem}\label{THM:bounds}
With the notation introduced in Proposition~\ref{PROP:commonden}, 
the following assertions hold.
\begin{itemize}
\item[(i)] A nonzero operator $L = \sum_{i=0}^\rho \ell_i S_x^i\in \set F[S_x]$ is a telescoper for~$T$
if and only if $\sum_{i=0}^\rho \ell_i r_ i = 0$.
\item[(ii)] Telescopers for~$T$ exist and their minimal order is no more than 
$\dim(\im(\phi_K)^\top)+\deg_y(D_0)$,
which, by writing $K = u/v$ with $u,v\in\set F[y]$ and $\gcd(u,v) = 1$, is equal to
\[
\max\{\deg_y(u),\deg_y(v)\}-\llbracket0\leq \deg_y(u-v)\leq \deg_y(u)-1\rrbracket
+ \sum_{j=1}^mk_j\mu_j \deg_z(P_j(z))
\]
in the usual shift case, or
\[
\max\{\deg_y(u),\deg_y(v)\}
+\llbracket K\ \text{is a nonpositive power of $q$}\rrbracket
+ \sum_{j=1}^mk_j\mu_j^2\deg_z(P_j(z))
\]
in the $q$-shift case.
\item[(iii)] Assume that $r$ is nonzero.
For all $i = 1,\dots,m$, let $\Lambda_i$ be the support of~$\alpha_i$, and write 
$\alpha_i = \sum_{j\in \Lambda_i} k_{ij}\delta_{\lambda_i,\mu_i}^j$ with $k_{ij}\in \set Z^+$. 
Then the order of a telescoper for~$T$ is at least
\[
\max_{1\leq i\leq m,\,j\in \Lambda_i}\ \min
\{\rho\in\set Z^+\mid k_{ij}\leq k_{i\ell} \ \text{and}\ 
j \equiv \ell + \lambda_i\rho \mod \mu_i\ \text{for some}\ \ell\in \Lambda_i\}.
\]
\end{itemize}
\end{theorem}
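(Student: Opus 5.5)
For (i), I would combine \eqref{EQ:ithhyperred} over $i=0,\dots,\rho$. This gives $L(T)=\Delta_y\big(\sum_i\sigma_x^{0}(\ell_i)g_iH\big)+\big(\sum_i\ell_ir_i\big)H$; the coefficients $\ell_i\in\set F$ are $\sigma_y$-constants, so they commute with $\Delta_y$. Hence $L$ is a telescoper exactly when $\big(\sum_i\ell_ir_i\big)H$ is $\sigma_y$-summable.

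To decide that, I would check that $\sum_i\ell_ir_i$ is itself a $\sigma_y$-remainder with respect to~$K$. Indeed, every $r_i$ has significant denominator dividing $D$ by Proposition~\ref{PROP:commonden}, and $D$ is $\sigma_y$-normal and strongly coprime with~$K$. Each $r_i$ has the form $h_i+p_i/v$ with $h_i$ proper and $p_i\in\im(\phi_K)^\top$. So the linear combination has the form $h+p/v$ with $h$ proper, its denominator a divisor of $D$ (hence $\sigma_y$-normal and strongly coprime by Proposition~\ref{PROP:props}), and $p\in\im(\phi_K)^\top$.

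The last clause of Theorem~\ref{THM:shellred} (equivalently the completeness statement \cite[Theorem~3.5]{CDGHL2025}) then says that $sH$ is $\sigma_y$-summable for a remainder $s$ iff $s=0$. One subtlety remains: a certificate must be $(\sigma_x,\sigma_y)$-hypergeometric, not merely $\sigma_y$-hypergeometric. Since $gH$ with $g\in\set F(y)$ is automatically $\sigma_x$-hypergeometric, I expect no difficulty there.

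For (ii), all $r_i$ lie in the $\set F$-vector space $V=\{h+p/v : h \text{ proper with denominator dividing } D,\ p\in\im(\phi_K)^\top\}$. Its dimension is $\deg_y(D)+\dim\im(\phi_K)^\top$, and $\deg_y D=\deg_y D_0$ because $\sigma_y$-related polynomials have the same degree. Therefore $r_0,\dots,r_N$ with $N=\dim V$ are linearly dependent, and (i) produces a telescoper. Existence thus follows without invoking the existence criterion.

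The explicit formulas require two computations. First, $\dim\im(\phi_K)^\top$ must be quoted or derived from the degree analysis of $\phi_K$ in \cite[\S3.3]{CDGHL2025}. In the usual case, $\deg_y\phi_K(p)=\deg_y p+\max(\deg u,\deg v)$ except in the cancelling case, which accounts for the Iverson bracket. In the $q$-case one must also handle the trailing-coefficient condition of $\sigma_y$-standardness, which produces the bracket ``$K$ is a nonpositive power of $q$''. Second, $\deg_y D_0=\sum_j k_j\sum_{i=0}^{\mu_j-1}\deg_y\delta^i(p_j)$. Each $\deg_y\delta^i(p_j)$ equals $\mu_j\deg_zP_j$ in the shift case, since $P_j(\lambda x+\mu y)$ has $y$-degree $\deg P_j$ times $1$. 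This needs care: the monic normalization gives $y$-degree $\deg P_j$ per factor, and then $\mu_j$ factors give $\mu_j\deg P_j$. In the $q$-case $P_j(x^\lambda y^\mu)$ has $y$-degree $\mu_j\deg P_j$, giving $\mu_j^2$.

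For (iii), the plan is to show that $r_0,\dots,r_{\rho-1}$ are linearly independent for $\rho$ below the stated bound; by (i), no telescoper of that order can then exist. Fix $i,j$ attaining the maximum and let $a=p_i^{\delta^j}$ with multiplicity $k_{ij}$ in $d=d_0$. By the proof of Proposition~\ref{PROP:commonden}, the factor of $D$ that is $\sigma_y$-equivalent to $a$ is $a$ itself, with multiplicity $\ge k_{ij}$. The significant denominator $d_t$ of $r_t$ is $\sigma_y$-related to $\sigma_x^t(d)$ and divides $D$. Its factor in the $\sigma_y$-class of $a$ therefore comes from a factor $p_i^{\delta^{\ell+\lambda_it}}$ of $\sigma_x^t(d)$ with $j\equiv\ell+\lambda_it\pmod{\mu_i}$, with multiplicity $k_{i\ell}$.

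If for all $1\le t<\rho$ no $\ell$ satisfies both $k_{ij}\le k_{i\ell}$ and the congruence, then $a^{k_{ij}}$ divides $d_0$ but divides no $d_t$ with $0<t<\rho$. In any nontrivial dependency $\sum c_tr_t=0$ with $c_0\neq0$, partial fractions at $a^{k_{ij}}$ then give a contradiction. A dependency with $c_0=0$ reduces by $\sigma_x$-shifting, using Lemma~\ref{LEM:xshiftrnf}-type invariance, to the case $c_0\ne0$. I expect this shifting reduction, and ensuring the remainders $r_t$ are chosen compatibly so that the pole order argument applies, to be the main obstacle.
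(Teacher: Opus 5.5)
Your proposal is correct and follows the paper's proof essentially step by step. For (i) you combine the reductions and observe that $\sum_i\ell_ir_i$ is again a $\sigma_y$-remainder because every $d_i$ divides $D$; for (ii) you use the dimension count $\deg_y(D)+\dim(\im(\phi_K)^\top)$ with $\deg_y(D)=\deg_y(D_0)$; and for (iii) you take partial fractions at a factor of $d_0$ together with the $\sigma_y$-relatedness of $d_n$ and $\sigma_x^n(d)$, merely phrased contrapositively as linear independence of $r_0,\dots,r_{B-1}$ for $B$ the stated bound.

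The case $c_0=0$ that you flag as the main obstacle is handled only implicitly in the paper, which takes a telescoper of minimal order and tacitly assumes its constant coefficient is nonzero. Your shifting reduction does work, but it is simpler to run the same pole-order argument at the least index $s$ with $c_s\neq 0$. Since $d_s$ is $\sigma_y$-related to $\sigma_x^s(d)$, the resulting index $t>s$ gives $j\equiv \ell+\lambda_i(t-s)\pmod{\mu_i}$ with $k_{ij}\le k_{i\ell}$ and $1\le t-s\le\rho$.
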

\begin{proof}
(i) Let $L = \sum_{i=0}^\rho \ell_i S_x^i$ with $\rho \in \set N$ and $\ell_0,\dots,\ell_\rho\in \set F$, not all zero.
Using \eqref{EQ:ithhyperred} with $i=0,\dots,\rho$, we have
\begin{equation}\label{EQ:telescoperred}
L(T) = \Delta_y\Big(\sum_{i=0}^\rho \ell_ig_i H\Big) + \Big(\sum_{i=0}^\rho \ell_i r_i\Big) H.
\end{equation}
For $i = 0,\dots, \rho$, since $d_i$ divides~$D$, we can write $r_i$ in the form
\begin{equation}\label{EQ:ithrem}
r_i = \frac{a_i}D + \frac{p_i}v,
\end{equation}
where $a_i\in \set F[y]$ with $\deg_y(a_i) < \deg_y(D)$, $p_i\in \im(\phi_K)^\top$ 
and $v$ is the denominator of~$K$. Then
\[
\sum_{i=0}^\rho \ell_i r_i = \frac{\sum_{i=0}^\rho \ell_ia_i}D+\frac{\sum_{i=0}^\rho\ell_i p_i}v.
\]
Notice that $\deg_y(a_i) < \deg_y(D)$ for all $i = 0,\dots, \rho$. So
$\deg_y(\sum_{i=0}^\rho \ell_ia_i) < \deg_y(D)$. Since $D$ is $\sigma_y$-normal
and strongly coprime with~$K$, we see that $\sum_{i=0}^\rho\ell_ia_i/D$ is a 
proper rational function in~$\set F(y)$ whose denominator is $\sigma_y$-normal and strongly
coprime with~$K$. Because $\sum_{i=0}^\rho \ell_ip_i \in \im(\phi_K)^\top$,
so by definition, $\sum_{i=0}^\rho\ell_ir_i$ is again a $\sigma_y$-remainder with respect to~$K$.
By Theorem~\ref{THM:shellred}, along with \eqref{EQ:telescoperred}, we see that 
$L$ is a telescoper for $T$ if and only if $\sum_{i=0}^\rho \ell_i r_i = 0$.

\smallskip\noindent
(ii) Let $\rho\in \set N$ and for $i =0,\dots,\rho$, write $r_i$ in the form \eqref{EQ:ithrem}.
Since $D$ is strongly coprime with~$K$, we have $\gcd(D,v) = 1$. 
Observe that the total number of nonzero terms in each $p_i$
is no more than $\dim(\im(\phi_K)^\top)$ as $p_i\in \im(\phi_K)^\top$.
Therefore, $r_0,\dots,r_\rho$ are linearly dependent over~$\set F$ whenever 
$\rho \geq \deg_y(D) + \dim(\im(\phi_K)^\top)$. By part (i), $T$ has telescopers 
and their minimal order is no more than $\deg_y(D) + \dim(\im(\phi_K)^\top)$,
whose precise formulas follow by
\cite[Proposition~3.12]{CDGHL2025} and the observation that
\[
\deg_y(D) = \deg_y(D_0)
= \begin{cases}
\sum_{j=1}^mk_j\mu_j \deg_z(P_j(z))&\text{in the usual shift case,}\\[.5cm]
\sum_{j=1}^mk_j\mu_j^2\deg_z(P_j(z))&\text{in the $q$-shift case.}
\end{cases}
\]

\smallskip\noindent
(iii) Since $r\neq 0$, the term $T$ is not $\sigma_y$-summable and thus its 
telescopers have orders at least one.
Let $L = \sum_{i=0}^\rho \ell_iS_x^i$ be a telescoper of minimal order for $T$, 
where $\rho \in \set Z^+$ and $\ell_0,\dots,\ell_\rho\in \set F$, not all zero.
Then $\sum_{i=0}^\rho \ell_i r_i = 0$ by part (i).
For $i = 0,\dots,\rho$, write $r_i = h_i + p_i/v$, where $h_i\in\set F(y)$ is proper with denominator $d_i$,
$p_i\in \im(\phi_K)^\top$ and $v$ is the denominator of~$K$. 
Since the $d_i$ are strongly coprime with~$K$, they are all coprime with~$v$.
It follows from $\sum_{i=0}^\rho \ell_i r_i = 0$ that
\[
\ell_0 h_0 + \ell_1 h_1 + \cdots + \ell_\rho h_\rho = 0.
\]
By partial fraction decomposition, for any irreducible factor $a\in\set F[y]$ 
of~$d_0$ over~$\set F$ with multiplicity $k>0$,
there exists an integer $n$ with $1\leq n\leq \rho$ such that  $a^k$ is also a factor of~$d_n$
and then, since $d_n$ is $\sigma_y$-related to $\sigma_x^n(d)$ by Proposition~\ref{PROP:shiftrem},
$a$ is $\sigma_y$-equivalent to an irreducible factor $\tilde a\in\set F[y]$ of $\sigma_x^n(d)$ 
over~$\set F$ with multiplicity at least $k$.
Notice that $d_0$ and $d$ are associates and $d$ admits the factorization \eqref{EQ:ildecomp}, 
where the $p_i$ are $\sigma_y$-inequivalent. So for all $i =1,\dots,m$ and $j\in \Lambda_i$, 
there exist $n_{ij}\in\set N$ with $1\leq n_{ij}\leq \rho$ and $\ell\in\Lambda_i$
such that
\[
k_{ij}\leq k_{i\ell}\quad\text{and}\quad
p_i^{\delta_{\lambda_i,\mu_i}^j}\ \text{is $\sigma_y$-equivalent to}\
p_i^{\delta_{\lambda_i,\mu_i}^{\ell+\lambda_i n_{ij}}},
\ \text{or equivalently,}\ j\equiv \ell + \lambda_i n_{ij} \mod \mu_i.
\]
Let the $n_{ij}$ be the minimal ones with such a property. The assertion follows by the fact that
the term $T$ does not have a telescoper of order less than $\max_{1\leq i\leq m,\, j\in \Lambda_i}\{n_{ij}\}$.
\end{proof}

\begin{example}\label{EX:bounds}
Assume that $\sigma_x$ and $\sigma_y$ are both the usual shift operators. 
Let $T(x,y) = {x+2y\choose y}$. Then $T$ is a $(\sigma_x,\sigma_y)$-hypergeometric 
term whose $\sigma_y$-quotient has a $\sigma_y$-standard kernel 
$K = (x + 2y + 1)(x + 2y + 2)/((y + 1)(x + y + 1))$ and a corresponding shell $S = 1$. 
Moreover, $r=(-x^2+x+2y+2)/(3v)$ is a $\sigma_y$-remainder of~$S$ with respect to~$K$, 
where $v = (y+1)(x+y+1)$. By Theorem~\ref{THM:bounds}, the minimal order of 
telescopers for~$T$ is no more than $\dim(\im(\phi_K)^\top) = 2$, which is 
exactly the real minimal order of telescopers for~$T$.
\end{example}
\begin{example}\label{EX:qbounds}
Assume that $\sigma_x$ and $\sigma_y$ are both the $q$-shift operators. 
Let $T(n,k) = \qbinom{n}{k}_q$.
Then $T$ is a $(\sigma_x,\sigma_y)$-hypergeometric term with 
$\sigma_x(T(n,k)) = T(n+1,k)$, $\sigma_y(T(n,k)) = T(n,k+1)$ and $q^n = x$, 
$q^k = y$, whose $\sigma_y$-quotient has a $\sigma_y$-standard kernel 
$K = (x-y)/(y(qy-1))$ and a corresponding shell $S = 1$. Moreover, $r=(x-y)/v$ is a 
$\sigma_y$-remainder of~$S$ with respect to~$K$, where $v = y(qy-1)$. 
By Theorem~\ref{THM:bounds}, the minimal order of telescopers for~$T$ is 
no more than $\dim(\im(\phi_K)^\top) = 2$, which is exactly the real 
minimal order of telescopers for~$T$.
\end{example}
\section{Comparison of order bounds}\label{SEC:comp}
Order bounds on the telescopers have been well studied in \cite{MoZe2005} where upper 
ones for both the usual shift and the $q$-shift cases were derived, and in \cite{AbLe2005} 
where a lower one for the usual shift case was obtained. 
We are not aware of any other work concerning lower order bounds in the $q$-shift case.
Since a comparison of order bounds has already been provided in \cite{Huan2016} for the usual shift case, 
we restrict our attention in this section to the $q$-shift case and compare our upper order bound
to the one obtained by Apagodu and Zeilberger~\cite{MoZe2005}.

Throughout this section, we let $n,k$ be two distinct discrete variables.
Then $q^n,q^k$ are transcendental over~$\set K$, since $q\in\set K$ is neither zero nor a root of unity. 
This means that all rational functions in $q^n,q^k$ over~$\set K$ constitute a well-defined field, 
denoted by~$\set K(q^n,q^k)$.
By taking $x = q^n$ and $y = q^k$ in the previous sections, everything remains valid.
We now consider a {\em $q$-proper hypergeometric term} over~$\set K(q^n,q^k)$ of the form
\begin{equation}\label{EQ:qproperform}
T(n,k) = p\,\xi^k q^{\gamma k (k-1)/2}
\prod_{i=1}^m\frac{(q;q)_{\alpha_in+\alpha_i'k+\alpha_i''}(q;q)_{\beta_in-\beta_i'k+\beta_i''}}
{(q;q)_{\mu_in+\mu_i'k+\mu_i''}(q;q)_{\nu_in-\nu_i'k+\nu_i''}},
\end{equation}
where $p\in\set K[q^n,q^{-n},q^k,q^{-k}]$, $\xi\in\set K\setminus\{0\}$ is not an integral power of~$q$, 
$\gamma,\alpha_i'',\beta_i'',\mu_i'',\nu_i''\in \set Z$, $m,\alpha_i,\alpha_i',\beta_i,\beta_i',\mu_i,\mu_i',\nu_i,\nu_i'\in \set N$.
Note that the hidden assumption that the number of $q$-Pochhammer symbols is the same for all four kinds 
does not lose any generality, since we can always put further terms $(q;q)_{0n\pm 0k+0}$ to achieve the balance.
Further assume that there exist no integers $i,j$ with $1\leq i,j\leq m$ such that
\[
\begin{array}{cccc}
&\{\alpha_i=\mu_j, &\alpha_i'=\mu_j', & \alpha_i''-\mu_j''\in \set N\}\\[1.5ex]
\text{or} & \{\beta_i = \nu_j, & \beta_i'=\nu_j', & \beta_i''-\nu_j''\in\set N\},
\end{array}
\]
which guarantees that the Laurent polynomial $p$ in \eqref{EQ:qproperform} is as ``large" as possible.
Then Apagodu and Zeilberger stated in \cite[$q$-Theorem]{MoZe2005} 
that the minimal order of telescopers for~$T$ is bounded by
\begin{equation}\label{EQ:AZbound}
B_{AZ} = \max\Big\{\gamma + \sum_{i=1}^m \alpha_i'^2,\sum_{i=1}^m \mu_i'^2\Big\}
+\max\Big\{-\gamma + \sum_{i=1}^m\nu_i'^2,\sum_{i=1}^m\beta_i'^2\Big\},
\end{equation}
which is generically sharp.
We next show that the number $B_{AZ}$ given above is at least the upper bound obtained 
by applying Theorem~\ref{THM:bounds} (ii) to~$T$. 
Reordering the terms in \eqref{EQ:qproperform} if necessary, let $\Lambda$ be the maximum set 
of indices $i\in\set N$ with $1\leq i\leq m$ and satisfying
\[
\begin{array}{cccc}
&\{\alpha_i=\mu_i, &\alpha_i'=\mu_i', & \mu_i''-\alpha_i''\in \set N\}\\[1.5ex]
\text{or} & \{\beta_i = \nu_i, & \beta_i'=\nu_i', &\nu_i''-\beta_i''\in\set N\}.
\end{array}
\]
Then $T$ can be rewritten as
\[
T(n,k) = f \xi^k q^{\gamma k (k-1)/2}
\prod_{i=1,\,i\notin \Lambda}^m
\frac{(q;q)_{\alpha_in+\alpha_i'k+\alpha_i''}(q;q)_{\beta_in-\beta_i'k+\beta_i''}}
{(q;q)_{\mu_in+\mu_i'k+\mu_i''}(q;q)_{\nu_in-\nu_i'k+\nu_i''}},
\]
where $f\in \set K(q^n,q^k)$. By a direct calculation, we confirm that $T(n,k+1)/T(n,k)$ admits 
an RNF of the form
\[
\Big(\xi q^{\gamma k }\prod_{i=1,\, i\notin \Lambda}^m
\frac{(q^{\alpha_i n + \alpha_i'k+\alpha_i''+1};q)_{\alpha_i'} (q^{\nu_i n-\nu_i' k + \nu_i''-\nu_i'+1};q)_{\nu_i'}}
{(q^{\mu_in+\mu_i'k+\mu_i''+1};q)_{\mu_i'}(q^{\beta_i n-\beta_i'k+\beta_i''-\beta_i'+1};q)_{\beta_i'}},
\, f\Big),
\]
where $(a;q)_m = \prod_{i=0}^{m-1}(1-aq^{i})$ for $a\in \set K(q^n,q^k)$ and $m\in \set N$ with $(a;q)_0=1$.
Notice that, by Lemma~\ref{LEM:twornfs}, the numerators, as well as the denominators,
of all kernels of a nonzero rational function in $\set K(q^n,q^k)$ have the same degree in~$q^k$.
So, with the notation introduced in Theorem~\ref{THM:bounds}, we obtain
\begin{align*}
\deg_{q^k}(u) &= \sum_{i=1,\, i\notin \Lambda}^m(\alpha_i'^2+\nu_i'^2) 
+\max\Big\{\sum_{i=1,\, i\notin \Lambda}^m(\beta_i'^2-\nu_i'^2)+\gamma,0\Big\}
\\[1ex]
\text{and}\quad 
\deg_{q^k}(v) &= \sum_{i=1,\, i\notin \Lambda}^m(\mu_i'^2+\beta_i'^2) 
+\max\Big\{-\sum_{i=1,\, i\notin \Lambda}^m(\beta_i'^2-\nu_i'^2)-\gamma,0\Big\}.
\end{align*}
Moreover, observe that $\xi$ is not an integral power of~$q$ and 
the denominator of the shell $f$ consists of integer-linear irreducible factors
dividing either $(1-q^{\mu_i n+\mu_i'k+\ell})$ or $q^{\nu_i'k}(1-q^{\nu_i n-\nu_i'k+\ell})$
for some $i\in \Lambda$ and $\ell\in \set Z$. 
Thus we conclude that the bound given in Theorem~\ref{THM:bounds}~(ii) is at most
\[
\max\{\deg_{q^k}(u),\deg_{q^k}(v)\} 
+ \sum_{i\in\Lambda}\Big(\frac{\mu_i'^2}{\gcd(\mu_i,\mu_i')}+\frac{\nu_i'^2}{\gcd(\nu_i,\nu_i')}\Big),
\]
which is no more than $B_{AZ}$ since $\alpha_i'=\mu_i'$ and $\beta_i'=\nu_i'$ for all $i\in \Lambda$.
In the generic situation, our upper bound in Theorem~\ref{THM:bounds}~(ii)
is as good as~$B_{AZ}$. However, our bound could be much tighter in some cases.

\begin{example}
Consider a $q$-proper hypergeometric term of the form
\[
T = \frac{2^{k+1}}{1-q^{n+\alpha k+\alpha}}-\frac{2^k}{1-q^{n+\alpha k}}+\frac{2^k}{(1-q^{n+\beta k})(1-q^{k})},
\]
where $\alpha,\beta$ are positive integers. We remark that the decomposed form given above is for readability only.
Rewriting $T$ into the form \eqref{EQ:qproperform} yields that
\[
T = p\, 2^k\frac{(q;q)_{n+\alpha k-1}(q;q)_{n+\beta k-1}(q;q)_{k-1}}{(q;q)_{n+\alpha k+\alpha} (q;q)_{n+\beta k} (q;q)_{k}}
\quad\text{for some $p\in\set Q(q)[q^n,q^k]$}.
\]
By~\eqref{EQ:AZbound}, we have $B_{AZ}=\alpha^2+\beta^2+1$.
However, our upper bound given by Theorem~\ref{THM:bounds} (ii) is equal to $\beta^2 + 1$,
which is exactly the minimal order of telescopers for~$T$.
\end{example}

\begin{remark}
Along with \cite[Theorem~4.6]{CHM2005}, the bound described in \cite[$q$-Theorem]{MoZe2005}
can be also applied to non-$q$-proper hypergeometric terms. On the other hand, 
Theorem~\ref{THM:bounds} is already applicable to any $q$-hypergeometric term
provided that its telescopers exist.
\end{remark}

\section*{Acknowledgments}
I would like to express my gratitude to Shaoshi Chen for constructive suggestions 
and support. I also would like to thank Hao Du, Yiman Gao and anonymous referees
for valuable 
comments on the original draft of this paper.
This work was partially supported by the NSFC grant (No.\ 12101105) and the
Natural Science Foundation of Fujian Province of China (No.\ 2024J01271).

\bibliographystyle{plain}
\bibliography{scrfs}

\end{document}